\crefname{hypothesis}{Hypothesis}{Hypotheses}
\title{A new system-wide diversity measure for recommendations with efficient algorithms}
\author{Arda Antikacioglu
  \thanks{(\email{antikacioglu@gmail.com}).}
\and Tanvi Bajpai\thanks{School of Computer Science, Carnegie Mellon University, Pittsburgh, PA.
 (\email{tbaj@cmu.edu}).}
\and R. Ravi\thanks{Tepper School of Business, Carnegie Mellon University, Pittsburgh, PA. (\email{ravi@cmu.edu})}}
\begin{document}

\maketitle

\begin{abstract}
  Recommender systems often operate on item catalogs clustered by genres, and user bases that have natural clusterings into user types by demographic or psychographic attributes. Prior work on system-wide diversity has mainly focused on defining intent-aware metrics among such categories and maximizing relevance of the resulting recommendations, but has not combined the notions of diversity from the two point of views of items and users.
In this work, (1) we introduce two new system-wide diversity metrics to simultaneously address the problems of diversifying the categories of items that each user sees, diversifying the types of users that each item is shown, and maintaining high recommendation quality.
We model this as a subgraph selection problem on the bipartite graph of candidate recommendations between users and items. (2) In the case of disjoint item categories and user types, we show that the resulting problems can be solved exactly in polynomial time, by a reduction to a minimum cost flow problem. (3) In the case of non-disjoint categories and user types, we prove NP-completeness of the objective and present efficient approximation algorithms using the submodularity of the objective.  (4) Finally, we validate the effectiveness of our algorithms on the MovieLens-1m and Netflix datasets, and show that algorithms designed for our objective also perform well on sales diversity metrics, and even some intent-aware diversity metrics. Our experimental results justify the validity of our new composite diversity metrics.
\end{abstract}

\begin{keywords}
  Recommender systems, System-wide Diversity, Subgraph selection, Network flow
\end{keywords}

\begin{AMS}
05C85, 68R10, 68W25
\end{AMS}

\section{Introduction}

The goal in the design of traditional recommendation systems is the accuracy of predictions as measured by the implied relevance of the recommended items. Collaborative filtering recommender systems are prone to providing item recommendations that are clustered in a filter-bubble~\cite{pariser2011filter} due to a rich-get-richer effect of commonly seen and rated items~\cite{fleder2009blockbuster}.
One potential `unsupervised' approach to address this may be to require some sort of expansion properties on the bipartite graph between users and items that are recommended to them.
However, in prior work, 
the various methods that have been proposed to diversify such recommendations typically focus on more targeted approaches such as increasing item exposure, re-ranking CF recommendations for diversity, or choosing appropriate subgraphs that reflect diversity metrics.

Often, such recommendation systems operate on item catalogs and user bases that have natural clusterings into item categories and user types. This single-minded focus on relevance fails to incorporate requirements of diversity of the recommendations among the item categories and user types. 
In this paper, we build on earlier targeted approaches for increasing diversity and 
propose a new model to achieve a holistic trade-off between user and item level diversity that also promotes system-level diversity. Our problem is motivated by three different considerations in incorporating such diversity in the design of recommender systems.

First, there is need for diversity in user's recommendation lists both in terms of items and categories, both to encourage serendipity \cite{tulving1994novelty} as well as improve user satisfaction~\cite{mcnee2006being}.

The second consideration we look at is item-level diversity, that is, we wish to show each item to a diverse set of users. Item-level diversity allows for a more holistic dissemination of items to users. User-level diversity fails to consider item-level diversity, since assigning recommendations based on user-satisfaction would still only show items to users who fall in their traditional ``audience." This would result in bad item-level diversity but could still give a high user-level diversity if recommendations are diverse enough.

Finally, the third consideration we have is system level diversity, which involves aggregating the recommendations made to all users and studying the resulting distribution of recommendations. The platform running the recommender system often has concerns other than pure user satisfaction or item-level diversity. Examples of such concerns include achieving good coverage of different categories in the item catalog and avoiding the perpetuation of biases across the system such as popularity bias or filter bubbles among demographic or psychographic clusters of users.

Typically, all three of these considerations are studied under the same ``diversity'' umbrella. However, systems that optimize for user or item level diversity do not necessarily score well in system-level diversity, motivating the need for a new objective that combines all of these considerations.

One common problem with deliberately increasing the diversity of recommendation systems is that they can change user behavior, which then changes future estimates of relevance, and hence eventually lead to polarization. 
We do not address this meta-concern in any detail so our targeted approach will also suffer from this same problem.
Nevertheless, targeting holistic diversity is a first step in this direction since the ensuing changes will not be necessarily biased in terms of changing only user behavior or the clustering of values of item relevances, but some mixture of the two.

\section{Related Work and Contributions}
First we survey previous work on category-aware metrics for diversification,
and then survey work on sales diversity measures that are system-wide measures of diversification. \ifdefined\THESIS Finally we review some related work on submodularity and crowd-sourcing.\fi
\subsection{Category-Aware Metrics}
Previous work has used category information in defining metrics for measuring the diversity contained in user lists. In our work, we focus on three, each  of which informs one of the baseline algorithms we compare against in our experimental section. \\

\textbf{Intra-list Distance (ILD):} We define a recommendation set's intra-list distance as the average pairwise distance among items \cite{carbonell1998use}. This is used to measure the diversity of an individual user's recommendations and quantifies user-novelty. The distance $dist(v_k,v_j)$ between items we consider is measured using the cosine similarity between the items' category membership vectors. Given a list $L$ of recommendations, defined by item lists of length $c_u$ for user $u$,
the intra-list distance is defined as follows (we use $L$ to denote the left-side of the bipartite graph representation representing the users, and $N(u_i)$ to represent the neighbors of user $u_i$, which are items in the right-hand side recommended to her).

\[ ILD = \frac{1}{|L|}\sum_{u_i \in L} \frac{1}{c_i(c_i-1)}\sum_{(v_k,v_j) \in N(u_i)} dist(v_k,v_j).\]

Maximizing this objective enforces items in the recommendation list of a user to be dissimilar, but ILD does not influence the resulting distribution of categories in the resulting list. Furthermore, over-representation of certain categories is not explicitly punished by this metric.
The MMR method~\cite{carbonell1998use} approximately optimizes the ILD metric, by greedily growing a recommendation list $S$. The next item to be added to the recommendation list is chosen to be the one which maximizes the quantity $\lambda rel(u_i, v_k) + (1- \lambda) \min_{v_j\in S} dist(v_k,v_j)$, where $\lambda$ is a trade-off parameter between 0 and 1, and $rel(u_i,v_k)$ represents the relevance score of item $v_k$ to user $u_i$. \\

\textbf{Intent-Aware Expected Reciprocal Rank (ERR-IA):} The ERR-IA metric is the intent-aware version of Expected Reciprocal Rank metric, introduced by Chapelle et al \cite{chapelle2009expected}. ERR-IA considers the sum of each item category's weighted marginal relevance. To do so, we consider the quantity $p(R_a)$, which is the probability that the desired recommendation set's target category is $R_a$. Chapelle et al \cite{chapelle2009expected} formally define ERR-IA for some $u$'s given recommendation set $N(u_i) = \{v_j\}_{j=1}^{c_i}$ as follows (Again, $rel(v_k)$ denotes the relevance of item $v_k$ for this user):
\[\sum_{R_a \in \mathcal{R}} p(R_a) \sum_{k = 1}^{c_i}\frac{1}{k}rel(v_k)\prod_{\ell=1}^{k-1} (1 - rel(v_\ell)). \]

ERR-IA is a personalized metric and aims for good coverage of relevant categories in the recommendation list. However, it does not explicitly penalize the over-representation of a particular category provided that it is well-covered. This metric is optimized by the xQuAD reranking strategy~\cite{santos2013explicit}. Similar to the MMR method, xQuAD greedily optimizes for its metric by greedily picking items which maximize the marginal change in the ERR-IA metric plus a relevance term. \\


\textbf{Binomial Diversity (BD):}
Binomial diversity is a diversity measure due to Vargas et al \cite{vargas2014coverage}; we omit the complete description of this metric due to its intricacy. 
Roughly speaking, the authors use a binomial distribution to model the coverage and redundancy of the categories based on the items included in the recommendation list.
Binomial diversity punishes both the under-representation and over-representation of a given category in a user's list, and strives for a balance between coverage and non-redundancy. It can be optimized for in the same way as xQuAD optimizes for the ERR-IA metric, and a thorough experimental evaluation of this method is carried out by \cite{vargas2014coverage}. However, due to the complexity of the metric, no explicit guarantees can be given for the performance of the algorithm.

\subsection{Sales Diversity}
In addition to adding diversity to a single user's recommendation list, we are also interested in surfacing content for increased feedback from the users. Since it is impossible for the users to give feedback on items that are not surfaced adequately by the system, we measure our algorithms by two sales diversity measures. The first of these metrics is {\bf aggregate diversity}, which counts the number of items that are shown to at least one user \cite{adomavicius2011maximizing, ge2010beyond, adamopoulos2011unexpectedness, adomavicius2012improving}. Our thresholded item diversity objective can be thought of as a refinement of aggregate diversity, where each item needs to be recommended to multiple different types of users instead of just once to anyone in the system.
The second sales diversity measure that we employ in our experiments is the {\bf Gini index} which is also widely employed in the recommender system community~\cite{shani2011evaluating, ge2010beyond, herlocker2004evaluating, ren2014avoiding, carterette2009analysis}.
Category-aware metrics surveyed above try to solve the filter bubble problem for the users, while the type information can be used to solve the same problem for the business running the recommender system. In our work, we incorporate aggregate information symmetrically from both item-category and user-type information in our metrics to address this aspect.

\subsection{Graph-theoretic Approaches for Recommender Diversity}

The first paper to use a subgraph selection model for maximizing aggregate diversity is~\cite{adomavicius2011maximizing}, where the authors reduce the problem to bipartite matching. Other recent papers~\cite{antikacioglu2015recommendation,kdd17Version} have refined the notion of using subgraph selection by formulating more involved diversity metrics such as redundant coverage of items, and minimizing the discrepancy from a target degree distribution on the items, and solving for them using network flow and greedy techniques. Our paper follows this recent line of work.

\subsection{Submodularity and NP-Completeness}
The problem of maximizing a submodular set function has been extensively analyzed in the last 40 years, starting with Nemhauser et. al. \cite{nemhauser1978analysis} Many of the problems we pose reduce to maximizing such a function, which is  NP-hard even when the function is monotone increasing. Nonetheless, the problem can be approximated using the greedy algorithm, which gives a $(1-1/e)$-approximation in the simplest case. Moreover, the constraint of choosing a subset of a fixed size, which corresponds to a uniform matroid constraint, can be replaced by any other matroid constraint without affecting the approximation ratio \cite{abbassi2013diversity}.
Since the coverage type objectives we define in this \ifdefined\THESIS chapter \else paper \fi are submodular, and our main type of constraints form a partition matroid, we make extensive use of results in this area.
Other researchers have considered the use of submodular functions in diversifying recommendations, but only over the set of a single user's recommendation set~\cite{habibi2014enforcing, kuccuktuncc2013diversified,santos2010exploiting}.

\subsection{Summary of Contributions}
%
\begin{enumerate}[leftmargin=1em,labelwidth=*,align=left]
\item We introduce two new metrics for recommendation diversity that we call thresholded item-diversity ($TIDiv$) and thresholded user-diversity ($TUDiv$) which consider the distribution of user types among an item's recommended user set and the distribution of item categories in a user's recommendation item set respectively. $TUDiv$ is an objective that is similar to other category-aware diversity metrics, but $TIDiv$ is unique in considering diversity among an item's recommendees. TIDiv can be thought of as a sales diversity metric, and explicitly addresses the need of a business to collect feedback from different types of users.

\item In the case of disjoint types and categories, we model the problem of maximizing type diversity across all items and category diversity across all users as a subgraph selection problem. We reduce the resulting problem to a minimum cost flow problem and obtain exact polynomial time algorithms (Theorem ~\ref{thm:chapter3_main_disjoint} in Section~\ref{sec-disjoint}) .

\item We address the case of non-disjoint types and categories in Section~\ref{subsec:thresholded_submodular}, where we prove that the problem of maximizing the same objectives mentioned above is NP-complete (Theorem ~\ref{thm:np-completeness}). While this rules out an exact polynomial time solution, we obtain a $(1-1/e)$-approximation using the submodularity of our objectives. We also show how to modify the algorithm to run in nearly linear time in the number of candidate recommendations (Theorem ~\ref{thm:submodular_runtime}), making it very efficient.

\item We conduct experiments using the MovieLens dataset that consider both disjoint item categories and overlapping ones. Our experimental setup is described in Section~\ref{sec:experimental_setup}, and the results are in Section~\ref{exp_results}. We show that despite being flow based, our algorithms for the disjoint case can easily handle problems involving millions of candidate edges. We also show that the greedy algorithm we describe is competitive in efficiency with the reranking approaches we compare against (Subsection ~\ref{sec:chap3overlap}), and competitive with our optimal flow based approach when used with disjoint categories and types (Subsection ~\ref{sec:chap3disjoint}). Our algorithms perform better than the baselines across the board on sales diversity metrics, and obtain good values for the other intent-aware metrics despite only optimizing for them by proxy.

\end{enumerate} 


\section{Disjoint Types and Categories}
\label{sec-disjoint}

We model the problem of making recommendations as a subgraph selection problem on a bipartite graph $G = (L \cup R, E)$ where the partition $L$ represents a set of users and partition $R$ represents a set of items. For each user $u_i$, we have a space constraint $c_i$, which is due to display space limitations on a given webpage.
For each edge $(u_i,v_j)$ between user $u_i$ and item $v_j$ in $G$, we are also given a real-valued relevance $rel(u_i,v_j)$ that is typically an actual rating or a predicted rating from a CF system.
Often, the graph $G$ available for selection of recommendations is chosen by using a CF system's relevance scores and only retaining edges that are higher than a minimum threshold relevance or quality value.
In this section, we model the case when the subgroups of users and items are disjoint.

We define a collection of subsets $\mathcal{L} = \{L_1,L_2,...L_n\}$ on the user set $L$ that represent different types of users and are mutually disjoint. Similarly, we define a collection of subsets $\mathcal{R} = \{R_1,R_2,...,R_m\}$ on the item catalog which partition $R$ to represent different categories or genres of items. This means there exists functions $type: L \rightarrow \mathcal{L}$, which maps users to their designated type, and $cat: R \rightarrow \mathcal{R}$, which maps items to their corresponding category. The edges between users and items in $G$ represent possible recommendations that can be made. We wish to output a subgraph $H$ of recommendations where each user $u_i$ has $c_i$ recommendations.

\subsection{Global edge-wise diversity}
Consider a recommendation edge $(u_i,v_j)$ in the subgraph $H$.
Let $\delta_i^H(cat(v_j))$ denote the number of neighbors user $u_i$ has in $v_j$'s category and $\delta_j^H(type(u_i))$ denote the number of neighbors $v_j$ has in $u_i$'s type in $H$. In order to achieve a diverse set of recommendations, we would like each user to see a large number of categories, while also showing each item to a large number of user-types.
To define a diversity metric that takes both of these considerations into account, we consider assigning the following weight to each edge where $\beta$ and $\mu$ are real valued parameters
\[w_{ij} = \frac{\beta}{\delta_i^H(cat(v_j))}+\frac{\mu}{\delta_j^H(type(u_i))} .\]

A weighting like this is natural, since we are assigning less weight to recommendations that are not novel for either the user type or the item category that this recommendation serves.
For instance, a recommendation edge that gives the user the only item from a category, and the item the only user from a type, will have the maximum weight of $\beta+\mu$.
We can now define the diversity of a solution subgraph $H$ as follows.

\[Div_{\beta,\mu}(H) = \sum_{u_iv_j\in H} w_{ij}\]

and subsequently maximize this objective for a highly diverse set of recommendations.

\begin{figure}
\begin{boxedminipage}{\textwidth}
\textbf{Input:} A relevance-weighted bipartite graph $G(L,R,E)$, a vector of display constraints $\{c_i\}_{i=1}^l$, a collection of user types $\mathcal{L}$, a collection of item categories $\mathcal{R}$, real-valued parameters $\beta,\mu$ .\\
\textbf{Output:} A subgraph $H\subseteq G$, of maximum degree $c_i$ at each node $u_i\in L$, and maximizing the objective $Div_{\beta,\mu}(H) + rel(H)$.
\end{boxedminipage}
\caption{The definition of the $\text{MAX}-Div_{\beta,\mu}$ problem.}
\end{figure}

\begin{proposition}
With the definition above

\[ Div_{\beta,\mu}(H) = \beta \sum_{u_i\in L} | a : R_a\cap N(u_i) \ne \emptyset | + \mu \sum_{v_j\in R} | a : L_a\cap N(v_j) \ne \emptyset |.\]

\end{proposition}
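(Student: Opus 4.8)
The plan is to prove the identity by splitting $Div_{\beta,\mu}(H)$ into its two additive pieces and regrouping the terms of each piece. By linearity of the sum defining the objective,
\[Div_{\beta,\mu}(H) = \beta\sum_{u_iv_j\in H}\frac{1}{\delta_i^H(cat(v_j))} + \mu\sum_{u_iv_j\in H}\frac{1}{\delta_j^H(type(u_i))},\]
so it suffices to show that the first sum equals $\sum_{u_i\in L}|a:R_a\cap N(u_i)\ne\emptyset|$ and, by the symmetric argument with the roles of users/types and items/categories interchanged, that the second equals $\sum_{v_j\in R}|a:L_a\cap N(v_j)\ne\emptyset|$.

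First I would fix a single user $u_i$ and rewrite the sum of $1/\delta_i^H(cat(v_j))$ over the edges of $H$ incident to $u_i$. The key step is to partition these incident edges according to the category of their item endpoint; because the collection $\mathcal{R}$ is a partition of $R$, every neighbor $v_j\in N(u_i)$ lies in exactly one category, so this grouping is well defined and exhaustive. Within the group corresponding to a fixed category $R_a$ (the edges $u_iv_j$ with $cat(v_j)=R_a$), the denominator $\delta_i^H(cat(v_j))=\delta_i^H(R_a)$ is constant and, by its very definition as the number of neighbors of $u_i$ in $R_a$, equal to the number of edges in that group. Hence each nonempty category group contributes exactly $\delta_i^H(R_a)\cdot (1/\delta_i^H(R_a))=1$ to the sum, while categories with no neighbor of $u_i$ contribute nothing.

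Summing over categories therefore yields exactly the number of distinct categories represented in $N(u_i)$, which is precisely $|a:R_a\cap N(u_i)\ne\emptyset|$. Summing over all $u_i\in L$ and multiplying by $\beta$ produces the first claimed term; repeating the identical counting argument for items, types, and the quantity $\delta_j^H(type(u_i))$ produces the second, which completes the proof.

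I do not anticipate a genuine obstacle here, since the statement reduces to a clean double-counting identity; the only point requiring care is that the regrouping relies crucially on $\mathcal{R}$ (and $\mathcal{L}$) being genuine partitions, so that each edge is counted under exactly one category (type) and the per-group sum telescopes to $1$. This disjointness is exactly what fails in the overlapping setting, which is why that case must be treated separately later in the paper.
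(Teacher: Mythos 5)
Your proof is correct and takes essentially the same approach as the paper's: both regroup the edge weights by (user, category) and (item, type) pairs so that each nonempty group of $\delta_i^H(R_a)$ edges, each carrying weight $\beta/\delta_i^H(R_a)$, telescopes to a contribution of exactly $\beta$ (and symmetrically $\mu$ on the item side). If anything, your version is slightly more careful than the paper's in stating explicitly that the regrouping is well defined only because $\mathcal{R}$ and $\mathcal{L}$ are genuine partitions.
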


\begin{proof}

We can think of each edge weight as a user contributing a fractional value towards the category the user is hitting as well as an item contributing a fractional value of towards the user-type the item gets hit by. For example, if a user $u_i$ has $4$ edges to some category, the value of each $\frac{\beta}{\delta_i^H(cat(v_j))}$ for every item $v$ in that category that $u$ is connected to is $\frac{\beta}{4}$. If some item $v_j$ has 3 edges coming from the same user-type, the value for $\frac{\mu}{\delta_j^H(type(u_i))}$ for each user $v_j$ is connected to is $\frac{\mu}{3}$. This means that, $Div(H)$ gets a value of $\beta$ for every category a user hits, and a value of $\mu$ for every user-type an item hits:
\begin{align*}
Div_{\beta,\mu}(H) &= \sum_{u_iv_j\in H} \frac{\beta}{\delta_i^H(cat(v_j))} + \frac{\mu}{\delta_j^H(type(u_i))} \\
     &= \sum_{u_i\in H} \sum_{R_a\cap N(u_i)} \frac{\beta}{|R_a\cap N(u_i)|} + \sum_{v_j\in H} \sum_{L_b\cap N(v_j)} \frac{\mu}{|L_b\cap N(v_j)|} \\
     &= \beta \sum_{u_i\in L} | a : R_a\cap N(u_i) \ne \emptyset | + \mu\sum_{v_j\in R} | a : L_a\cap N(v_j) \ne \emptyset |.
\end{align*}

\end{proof}

We can isolate both terms of this expression as their own objectives, which may be formalized as follows:

\[ UserDiv(H) = \sum_{u_i\in L} \sum_{R_a} 1[\exists v_j \in R_a: u_iv_j\in H]. \]

\[ ItemDiv(H) = \sum_{v_j\in R} \sum_{L_a} 1[\exists u_i \in L_a: u_iv_j \in H].\]

Here, $UserDiv(H)$ will give us reward proportional to the number of categories hit for each user and $ItemDiv(H)$ will give us reward proportional to the number of user types hit for each item.

 Ignoring type information, we first show that $UserDiv(H)$ can be optimized in polynomial time, since this construction is simpler to formulate and solve in practice.

\begin{theorem}
\label{user-diversity}
The problem of maximizing $Div_{\beta,\mu}$ can be reduced to a minimum cost flow problem if the categories are disjoint, i.e. $R_a\cap R_b = \emptyset$ for all $a,b$.
\end{theorem}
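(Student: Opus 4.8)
The plan is to realize the objective $Div_{\beta,\mu}(H) + rel(H)$ as the negative of the cost of a flow in an auxiliary network, exploiting the Proposition's rewriting $Div_{\beta,\mu} = \beta\, UserDiv + \mu\, ItemDiv$ as a sum of two \emph{coverage} rewards. The crucial observation is that each coverage term is concave in the number of selected edges: once user $u_i$ has any edge into category $R_a$, that category is ``hit'' and contributes exactly $\beta$ no matter how many further edges into $R_a$ are chosen, so the marginal reward sequence is $\beta,0,0,\dots$. Negating for a minimization turns this into a \emph{convex} arc cost, which is precisely what min-cost flow encodes by parallel arcs of increasing cost. Following the text, I would first carry this out for $UserDiv$ alone (the user-side case), and then append a symmetric gadget on the item side to obtain the full $Div_{\beta,\mu}$.

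Concretely, first I would build a directed acyclic network with a source $s$ and sink $t$. For each user $u_i$ I add an arc $s \to u_i$ of capacity $c_i$ and cost $0$; this single capacity enforces the display constraint of at most $c_i$ recommendations (if each user must receive \emph{exactly} $c_i$ items, impose the lower bound $c_i$ on this arc, a standard transformation). For each user $u_i$ and each category $R_a$ meeting $u_i$'s candidate neighborhood I add a coverage node $(u_i,a)$ together with \emph{two parallel arcs} $u_i \to (u_i,a)$: one of capacity $1$ and cost $-\beta$, and one of capacity $|R_a\cap N(u_i)|$ and cost $0$. Here disjointness of the categories is used: every candidate item lies in a unique category, so these gadgets partition $u_i$'s outgoing edges cleanly. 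From $(u_i,a)$ I route to each candidate item $v_j \in R_a$ by a unit-capacity arc carrying the relevance cost $-rel(u_i,v_j)$, and finally $v_j \to t$. A unit of flow along $s \to u_i \to (u_i,a) \to v_j \to t$ then corresponds to selecting the recommendation edge $(u_i,v_j)$, paying $-rel(u_i,v_j)$, while the $-\beta$ arc is traversed exactly when the first edge into category $a$ is chosen.

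The main things to verify are correctness and integrality. Since min-cost flow polytopes are integral (the incidence matrix is totally unimodular, and splitting a convex-cost arc into unit arcs preserves this), an optimal flow may be taken integral, and integral flows are in bijection with feasible subgraphs $H$ respecting the degree bounds. The key correctness point — and the step I expect to be the main obstacle to state cleanly — is that the convex cost guarantees the $-\beta$ ``bonus'' arc of each gadget is used \emph{before} its zero-cost twin: because we are minimizing and $-\beta$ is the cheaper of the two parallel arcs, any optimal flow routing positive flow into $(u_i,a)$ sends its first unit through the bonus arc, so the gadget contributes $-\beta$ precisely once, exactly when category $a$ is hit, and never over- or under-counts. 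Because the network is acyclic there are no negative-cost cycles, so a minimum-cost flow of free value from $s$ to $t$ is well defined and computable by successive shortest augmenting paths; its cost equals $-\big(\beta\, UserDiv(H) + rel(H)\big)$, so minimizing cost maximizes the objective.

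Finally, to capture the full $Div_{\beta,\mu}$ I would splice a mirror-image gadget between each item and the sink: for item $v_j$ and user-type $L_b$, insert a type-coverage node $(v_j,b)$ with parallel arcs of cost $-\mu$ (capacity $1$) and $0$, so that each user-type it hits is rewarded once by $\mu$; disjointness of the types plays here the role that disjointness of categories played above, ensuring each selected edge feeds exactly one such gadget. The composed network is still acyclic, its arc count is polynomial in the number of candidate edges, categories, and types, and its minimum-cost flow now has cost $-Div_{\beta,\mu}(H) - rel(H)$; hence maximizing $Div_{\beta,\mu} + rel$ reduces to a single polynomial-size min-cost flow, as claimed.
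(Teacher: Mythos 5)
Your construction is correct and takes essentially the same approach as the paper: your unit-capacity cost $-\beta$ arc in parallel with a free arc into a per-(user, category) collector node is exactly the paper's $n'_{i,a}$/$n_{i,a}$ gadget (the paper merely splits the bonus arc with an intermediate node rather than using parallel arcs), with disjointness used identically to route each candidate edge through a unique gadget, and your integrality and ``cheaper parallel arc saturates first'' observations just make explicit what the paper leaves implicit. One point of comparison: the paper's own proof of this theorem implements only the $UserDiv$ side (cost $-1$, no relevance terms), whereas your item-side mirror gadget and $-rel(u_i,v_j)$ edge costs reproduce, with all thresholds set to $1$, the construction the paper gives later for Theorem~\ref{thm:chapter3_main_disjoint}, so you have in effect proved the stronger two-sided statement that the paper defers.
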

\begin{proof}
For each $u_i \in L$, we set supplies of $c_i$, and a demand of $\sum_{u_i\in L} c_i$ for a newly created sink node $t$. For each user $u_i$ and category $R_a$ such that $\exists v_j \in R_a$ such that $u_iv_j \in G$ we create nodes  $n_{i,a}$ and $n'_{i,a}$. We will create an arc of capacity 1 and cost $-1$ between every $u_i$ and $n'_{i,a}$. We will also add arcs of capacity $1$ and cost $0$ between every $n'_{i,a}$ and $n_{i,a}$ and arcs of unbounded capacity and cost $0$ between $u_i$ and $n_{i,a}$. For each edge $u_iv_j$ in $G$ where $v_j \in R_a$ we create an arc of capacity 1 and cost 0 between $n_{i,a}$ and $v_j$. Finally, from each $v_j \in R$ we make an arc of unbounded capacity and cost 0 to the sink node $t$.

We let the solution subgraph $H$ be the subgraph of $G$ formed by using edges $u_iv_j$ for all arcs of the form ($n_{i,a}$,$v_j$) used in the flow. Each node now gets to take one recommendation in each new category for a cost of $-1$, Therefore, the cost of a flow defined by $H$ is $ - \sum_{u_i\in L} \sum_{R_a} 1[\exists v_j \in R_a: u_iv_j\in H]$. Minimizing this quantity is the same as maximizing $UserDiv(H)$, which proves the result.
\end{proof}

\begin{proposition}
If every user is his own type, then subject to display constraints, $Div_{\beta,\mu}(H) \propto UserDiv(H)$, and $Div_{\beta,\mu}(H)$ can be maximized exactly in polynomial time.
\end{proposition}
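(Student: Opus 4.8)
The plan is to reduce the claim to Theorem~\ref{user-diversity} by showing that, under the stated hypothesis, the item-side term of the diversity objective contributes only a fixed constant. I would start from the decomposition established in the preceding proposition, namely $Div_{\beta,\mu}(H) = \beta\cdot UserDiv(H) + \mu\cdot ItemDiv(H)$, where $UserDiv(H)=\sum_{u_i\in L}|a:R_a\cap N(u_i)\ne\emptyset|$ and $ItemDiv(H)=\sum_{v_j\in R}|a:L_a\cap N(v_j)\ne\emptyset|$. Everything then hinges on analyzing $ItemDiv(H)$ when every user is his own type.

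The key observation is that ``every user is his own type'' means each type $L_a$ is a singleton. Consequently, for a fixed item $v_j$, a type $L_a$ is hit (i.e. $L_a\cap N(v_j)\ne\emptyset$) if and only if the single user comprising $L_a$ is adjacent to $v_j$ in $H$. Hence the number of types an item hits is exactly its degree $|N(v_j)|$ in $H$, and summing over items gives $ItemDiv(H)=\sum_{v_j\in R}|N(v_j)| = |E(H)|$, the total number of recommendation edges.

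I would then invoke the display constraints: since each user $u_i$ receives exactly $c_i$ recommendations, the bipartite edge count satisfies $|E(H)|=\sum_{u_i\in L}|N(u_i)|=\sum_{u_i\in L}c_i$, which is independent of the choice of $H$. Substituting back yields $Div_{\beta,\mu}(H)=\beta\cdot UserDiv(H)+\mu\sum_{u_i\in L}c_i$, so up to a fixed additive constant (and a $\beta$ scaling) $Div_{\beta,\mu}$ and $UserDiv$ agree; in particular they share the same maximizers, which is the sense in which $Div_{\beta,\mu}(H)\propto UserDiv(H)$. Maximizing $Div_{\beta,\mu}$ is therefore identical to maximizing $UserDiv$, and by Theorem~\ref{user-diversity} the latter reduces to a minimum-cost flow instance solvable in polynomial time, completing the argument.

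The step I expect to require the most care is the use of the display constraints to pin $|E(H)|$ to a constant. If the constraints are read as ``at most $c_i$'' rather than ``exactly $c_i$,'' then $ItemDiv(H)=|E(H)|$ is no longer automatically fixed, and I would need the supplementary observation that, for $\beta,\mu\ge 0$, both $UserDiv$ and $ItemDiv$ are monotone nondecreasing under edge additions, so an optimal solution may be taken to use exactly $c_i$ edges per user whenever the candidate graph permits; this restores the constancy of $|E(H)|$ and hence the reduction to $UserDiv$.
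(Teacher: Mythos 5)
Your proof is correct and follows essentially the same route as the paper's: both use the decomposition $Div_{\beta,\mu}(H) = \beta\, UserDiv(H) + \mu\, ItemDiv(H)$, note that singleton types make $ItemDiv(H)$ equal the total edge count, observe that the display constraints fix this at $\sum_{u_i\in L} c_i$, and then invoke Theorem~\ref{user-diversity}. Your closing remark about the ``at most $c_i$'' reading is in fact slightly more careful than the paper, whose proof tacitly assumes every user's capacity is saturated even though the problem statement only imposes maximum degree $c_i$; your monotonicity observation cleanly patches that gap.
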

\begin{proof}
If every user is his own type, then the quantity $| a : L_a\cap N(v_j) \ne \emptyset |$ simply counts the number of edges incident on an item $v_j$. Therefore, we obtain

\begin{align*}
Div_{\beta,\mu}(H) &= \beta \sum_{u_i\in L} | a : R_a\cap N(u_i) \ne \emptyset | + \mu \sum_{v_j\in R} \delta^H(v_j) \\
     &= \beta UserDiv(H) + \mu \sum_{u_i\in L} \delta^H(u_i) \\
     &= \beta UserDiv(H) + \mu \sum_{u_i\in L} c_i .
\end{align*}

Since the quantity on the right is constant, the result follows from Theorem \ref{user-diversity}.
\end{proof}

Finally, we prove the theorem in the most general case by combining the objectives $UserDiv(H)$ and $ItemDiv(H)$. In fact, this is possible while incorporating rating relevance into the objective. In particular, let $rel(u_i,v_j)$ denote the relevance of item $v_j$ to user $u_i$. Then the relevance based quality of the entire recommender system can be computed as $rel(H) = \sum_{(u_i,v_j) \in H}  rel(u_i,v_j)$. We can now state the main result of this section.

\begin{theorem}
\label{thm:u_plus_i}
The MAX-$Div_{\beta,\mu}$ problem can be reduced to a minimum cost flow problem if both user types and item categories are disjoint, i.e. $R_a \cap R_b = \emptyset$ and and $L_a \cap L_b = \emptyset$ for all $a,b$.
\end{theorem}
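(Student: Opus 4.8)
The plan is to build a single minimum cost flow network that simultaneously rewards each new category a user hits (exactly as in Theorem~\ref{user-diversity}) and, symmetrically, each new user-type an item is shown to, while charging every recommendation edge a cost equal to its negated relevance. Since the first Proposition gives the decomposition $Div_{\beta,\mu}(H) = \beta\, UserDiv(H) + \mu\, ItemDiv(H)$, minimizing total flow cost will coincide with maximizing $Div_{\beta,\mu}(H) + rel(H)$.

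First I would reuse the user-side gadget verbatim: for every reachable pair $(u_i, R_a)$ keep the nodes $n_{i,a}, n'_{i,a}$, the capacity-$1$ bonus arc $u_i \to n'_{i,a}$ of cost $-\beta$, the arc $n'_{i,a}\to n_{i,a}$, and the unbounded overflow arc $u_i \to n_{i,a}$. I would then mirror this on the item side: for every reachable pair $(v_j, L_b)$ introduce nodes $m_{j,b}, m'_{j,b}$, a capacity-$1$ bonus arc $m'_{j,b}\to v_j$ of cost $-\mu$, the arc $m_{j,b}\to m'_{j,b}$, and an unbounded overflow arc $m_{j,b}\to v_j$. The two gadgets are glued together by the recommendation edges themselves: for each $u_iv_j \in G$ with $v_j\in R_a$ and $u_i\in L_b$ I place a single capacity-$1$ arc $n_{i,a}\to m_{j,b}$ of cost $-rel(u_i,v_j)$. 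Finally each $u_i$ supplies $c_i$ units, each $v_j$ sends unbounded flow to a sink $t$, and $t$ demands $\sum_i c_i$, so that any feasible integral flow routes exactly $c_i$ units out of each user and hence selects a subgraph $H$ with $\deg_H(u_i)=c_i$.

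The core of the argument is a two-sided version of the bonus accounting in Theorem~\ref{user-diversity}. Reading off $H$ from the edge arcs carrying flow, I would show the cost of any integral flow equals $-\beta\, b_{\mathrm{user}} - \mu\, b_{\mathrm{item}} - rel(H)$, where $b_{\mathrm{user}}$ and $b_{\mathrm{item}}$ count the bonus arcs used on each side. Because a bonus arc for $(u_i,R_a)$ can carry flow only through an edge arc leaving $n_{i,a}$, it is used only when $R_a\cap N(u_i)\neq\emptyset$, giving $b_{\mathrm{user}}\le UserDiv(H)$, and symmetrically $b_{\mathrm{item}}\le ItemDiv(H)$; thus every flow has cost at least $-(Div_{\beta,\mu}(H)+rel(H))$. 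Conversely, for a fixed $H$ one unit is routed along each selected edge arc, and for each hit pair $(u_i,R_a)$ (resp. $(v_j,L_b)$) exactly one of those units is threaded through the corresponding bonus arc of cost $-\beta$ (resp. $-\mu$), attaining equality. Hence the minimum flow cost equals $-\max_H\big(Div_{\beta,\mu}(H)+rel(H)\big)$, and integrality of min-cost flow returns an optimal integral $H$.

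The step I expect to be the main obstacle is verifying that the two bonus mechanisms genuinely decouple, i.e. that a single unit of flow may independently elect the user-side bonus and the item-side bonus. This works precisely because each edge arc $n_{i,a}\to m_{j,b}$ sits between the two gadgets, so the choice of passing through $n'_{i,a}$ before the arc is independent of the choice of passing through $m'_{j,b}$ after it; confirming that the ``pick one representative edge per hit pair'' routing is simultaneously feasible on both sides is the delicate part of establishing the equality direction. The remaining checks---feasibility whenever each $c_i\le\deg_G(u_i)$, and integrality of the optimum---are routine.
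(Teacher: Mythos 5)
Your proposal is correct and is essentially the paper's own argument: the paper omits a dedicated proof of this theorem in favor of the more general thresholded version (Theorem~\ref{thm:chapter3_main_disjoint}), whose network is precisely your construction with bonus-arc capacities $\rho_i(R_a)=\lambda_j(L_b)=1$, the same $-\beta$/$-\mu$ bonus arcs, unbounded overflow arcs, capacity-$1$ edge arcs of cost $-rel(u_i,v_j)$, user supplies $c_i$, and a common sink. If anything, your two-sided accounting (a cost lower bound for every integral flow plus an attaining routing for each fixed $H$, with the observation that the series arrangement of the gadgets decouples the two bonus choices) is spelled out more carefully than the paper's one-line cost computation.
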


We omit the proof in favor of presenting a more general result in Theorem \ref{thm:chapter3_main_disjoint}.


\subsection{Diversity Thresholds}
\label{subsec:thresholded_disjoint}

While increasing user and item diversity is important, one downfall to our method is that it fails to take into account the fact that the relevance of each category to a user may be different. It may not be beneficial for our recommender to show a user items from every different category possible, since that user may not be interested in some of those categories to begin with. The same can be said for the item side: item diversity may increase an item's popularity and help it collect feedback, however, an item should be shown to users in its target audience more than users outside its target audience.

To fix this, and help guide our algorithm to selecting more relevant recommendations for each user and item,
we propose setting diversity thresholds for each user-category and item-type pair. For categories that the user cares a lot about, we can increase this threshold while setting it to zero for those that the user is not interested in at all. Let us denote $\rho_i(R_a)$ as user $u_i$'s threshold for recommendations made to items in category $R_a$, and $\lambda_j(L_b)$ be an item $v_j$'s threshold for recommendations made from users of type $L_b$. We now define two updated objectives that take these thresholds into account:

\[TUDiv(H) = \sum_{u_i \in L} \sum_{R_a} \min(\rho_i(R_a), \delta_i^H(R_a)). \]

\[TIDiv(H) = \sum_{v_j \in R} \sum_{L_b} \min(\lambda_j(L_b),\delta_j^H(L_b)). \]

Notice that relative to $UserDiv$, for a user $u_i$, we are simply increasing the diversity gain from a  seeing new items from a category $R_a$ up to a threshold value of $\rho_i(R_a)$. If we set all the $\rho$ and $\lambda$ values to 1 in the above expressions, we recover $UserDiv$ and $ItemDiv$.

We can again consider these two objectives together to form a single objective that will maximize the thresholded diversity of a solution subgraph $H$, where $\beta$ and $\mu$ are real-valued parameters:
\[ TDiv_{\beta,\mu}(H) = \beta \cdot TUDiv(H) + \mu \cdot TIDiv(H) .\]

\begin{figure}
\begin{boxedminipage}{\textwidth}
\textbf{Input:} A weighted bipartite graph $G(L,R,E)$, a vector of display constraints $\{c_i\}_{i=1}^l$, a collection of user types $\mathcal{L}$, a collection of item categories $\mathcal{R}$, user-category thresholds $\{\rho_i(R_a)\}_{i,a}$, item-type thresholds $\{\lambda_j(L_b)\}_{j,b}$, real-valued parameters $\beta,\mu$ .\\
\textbf{Output:} A subgraph $H\subseteq G$, of maximum degree $c_i$ at each node $u_i\in L$, and maximizing the objective $TDiv_{\beta,\mu}(H) + rel(H)$.
\end{boxedminipage}
\caption{The definition of the $\text{MAX}-TDiv_{\beta,\mu}$ problem.}
\end{figure}

The main result of this section is that in the case where user types and item categories are disjoint, $TDiv(H)$ can still be optimized in polynomial time.

\begin{theorem}

The $\text{MAX}-TDiv_{\beta,\mu}$ problem can be reduced to a minimum cost network flow problem if both user types and item categories are disjoint, i.e. $R_a \cap R_b = \emptyset$ and and $L_a \cap L_b = \emptyset$ for all $a,b$.
\label{thm:chapter3_main_disjoint}
\end{theorem}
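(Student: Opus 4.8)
The plan is to generalize the single-sided construction of Theorem~\ref{user-diversity} to a flow network that simultaneously encodes the two thresholded rewards $TUDiv$ and $TIDiv$ together with the relevance term $rel(H)$. The guiding principle is that a single unit of flow should correspond to the selection of one candidate edge $(u_i,v_j)\in G$, and that on its way from a source to the sink this unit should be forced to pass through exactly one user-category gadget (indexed by $u_i$ and $cat(v_j)$) and exactly one item-type gadget (indexed by $v_j$ and $type(u_i)$). This is where disjointness is indispensable: because the categories $\mathcal{R}$ partition $R$ and the types $\mathcal{L}$ partition $L$, each candidate edge $(u_i,v_j)$ determines a \emph{unique} category $R_a=cat(v_j)$ and a \emph{unique} type $L_b=type(u_i)$, so the routing of its flow unit through the two gadgets is unambiguous.

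Concretely, I would create a source $s$ sending supply $c_i$ into each user node $u_i$ and a sink $t$ with demand $\sum_{u_i\in L} c_i$. For every pair $(u_i,R_a)$ that has at least one candidate edge I introduce a node $m_{i,a}$, and connect $u_i$ to $m_{i,a}$ by two parallel arcs: one of capacity $\rho_i(R_a)$ and cost $-\beta$, and one of unbounded capacity and cost $0$. Symmetrically, for every pair $(v_j,L_b)$ with a candidate edge I introduce a node $p_{j,b}$ and connect it to $v_j$ by two parallel arcs, of capacity $\lambda_j(L_b)$ and cost $-\mu$, and of unbounded capacity and cost $0$. Each candidate edge $(u_i,v_j)$ with $cat(v_j)=R_a$ and $type(u_i)=L_b$ becomes a single arc $m_{i,a}\to p_{j,b}$ of capacity $1$ and cost $-rel(u_i,v_j)$, and finally each $v_j$ is joined to $t$ by an unbounded cost-$0$ arc. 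A flow then selects edge $(u_i,v_j)$ into $H$ exactly when the corresponding unit-capacity arc is saturated.

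To argue correctness I would observe that, by flow conservation, the flow entering $m_{i,a}$ equals $\delta_i^H(R_a)$ and the flow entering $p_{j,b}$ equals $\delta_j^H(L_b)$. The crucial point is that in any minimum-cost flow the cheaper of two parallel arcs is used to capacity before the more expensive one carries any flow (otherwise rerouting a unit onto the cheaper arc strictly lowers the cost, without changing feasibility or any other arc's flow), so the two $u_i\to m_{i,a}$ arcs contribute exactly $-\beta\min(\rho_i(R_a),\delta_i^H(R_a))$ and the two $p_{j,b}\to v_j$ arcs contribute exactly $-\mu\min(\lambda_j(L_b),\delta_j^H(L_b))$. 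Summing over all gadgets and adding the edge costs shows that the cost of the flow equals $-\bigl(rel(H)+\beta\,TUDiv(H)+\mu\,TIDiv(H)\bigr)=-\bigl(rel(H)+TDiv_{\beta,\mu}(H)\bigr)$, so a minimum-cost flow maximizes the desired objective. Integrality of min-cost flow with integer capacities then guarantees that the optimal flow is $\{0,1\}$ on the edge arcs, hence yields a genuine subgraph $H$ respecting the display bound $c_i$ at each user.

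I expect the main obstacle to be verifying that the threshold (concave) rewards are faithfully realized by the parallel-arc gadgets, i.e. the ``cheapest-arc-first'' argument, and confirming that it holds for both sides simultaneously rather than in isolation; this reduces to checking that a min-cost flow can never gain by using an expensive parallel arc while a cheaper one sits idle, which follows because the choice between parallel arcs affects only that segment's cost and never the flow value passing through the gadget. A secondary, more routine point is the degree constraint: as written the supplies force each user to emit exactly $c_i$ units, which matches the ``maximum degree $c_i$'' requirement whenever all edge and diversity contributions are nonnegative; if one wishes to allow strictly fewer than $c_i$ recommendations, I would add a cost-$0$ bypass arc $u_i\to t$ so that unprofitable slots can be left unused without affecting the reduction.
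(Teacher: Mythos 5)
Your proposal is correct and takes essentially the same approach as the paper's proof of Theorem~\ref{thm:chapter3_main_disjoint}: the same supplies $c_i$ at the users, the same thresholded gadgets realizing $-\beta\min(\rho_i(R_a),\delta_i^H(R_a))$ and $-\mu\min(\lambda_j(L_b),\delta_j^H(L_b))$ via a bounded negative-cost arc alongside an unbounded zero-cost bypass, and the same unit-capacity arcs of cost $-rel(u_i,v_j)$ encoding candidate edges, with the only cosmetic difference being that you use parallel arcs where the paper subdivides them with auxiliary node pairs $n_{i,a},n'_{i,a}$ and $m_{j,b},m'_{j,b}$. Your explicit handling of integrality and of the exactly-$c_i$ supply (via the optional $u_i\to t$ bypass) is in fact slightly more careful than the paper's own writeup, but it does not alter the argument.
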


\begin{proof}

A diagram of the construction can be found in Figure ~\ref{fig:chapter3_tudiv_plus_tidiv_diagram}.
Our network will have nodes for all users $u_i \in L$ and items $v_j \in R$ of $G(L\cup R, E)$, and a sink node $t$. The supply for each user $u_i$ will be its corresponding space constraint $c_i$. For every category $R_a$ that a user $u_i$'s recommendations hit, we will create two nodes $n_{i,a}$ and $n_{i,a}'$. Let there be an arc of capacity $\rho_i(R_a)$ and cost $-\beta$ between $u_i$ and $n_{i,a}'$ and an arc with capacity $\rho_i(R_a)$ and cost $0$ between $n_{i,a}'$ and  $n_{i,a}$. There will also be an arc with unbounded capacity and cost $0$ between $u_i$ and $n_{i,a}$. Similarly, for an item $v_j$, we will create two nodes for each user type   $L_b$ its incoming edges are from, $m_{j,b}$ and $m_{j,b}'$. Let there be an arc of capacity $\lambda_j(L_b)$ and cost $-\mu$ between $m_{j,b}'$ and $v_j$, and an arc of capacity $\lambda_j(L_b)$ and cost $0$ between $m_{j,b}$ and $m_{j,b}'$. We will also add an arc of unbounded capacity and cost $0$ between $m_{j,b}$ and $v$. For each edge $(u_i,v_j) \in E$, where $v_j$ is in category $R_a$ and $u_i$ is of type $L_b$, we will add an arc with cost $-rel(u_i,v_j)$ and capacity $1$ between $n_{i,a}$ and $m_{j,b}$. Finally, there will be an arc from every item $v_i$ to the sink $t$ with unbounded capacity and cost $0$.

We let the solution subgraph $H$ be the subgraph of $G$ formed by using edges ($u_i,v_j$) for all arcs of the form ($n_{i,a}$,$m_{j,b}$) used in the flow. The cost of the flow induced by $H$ will therefore cost \[-\beta \sum_{u_i \in L} \sum_{R_a} \min(\rho_i(R_a), \delta_i^H(R_a)) - \mu\sum_{v_j \in R} \sum_{L_b} \min(\lambda_j(L_b),\delta_j^H(L_b)) - rel(H)\] since we may use the $-\beta$ cost arc for each user-category pair and the $-\mu$ cost arc item-type pair up until they reaches capacity. This quantity is simply \[-\beta TUDiv(H) - \mu TIDiv(H) -rel(H) = -TDiv_{\beta,\mu}(H) -rel(H).\]
Therefore, minimizing this quantity will maximize $TDiv_{\beta,\mu}(H) + rel(H)$.
\ \\
\end{proof}

\begin{figure}
\centering
 \includegraphics[width=0.95\columnwidth]{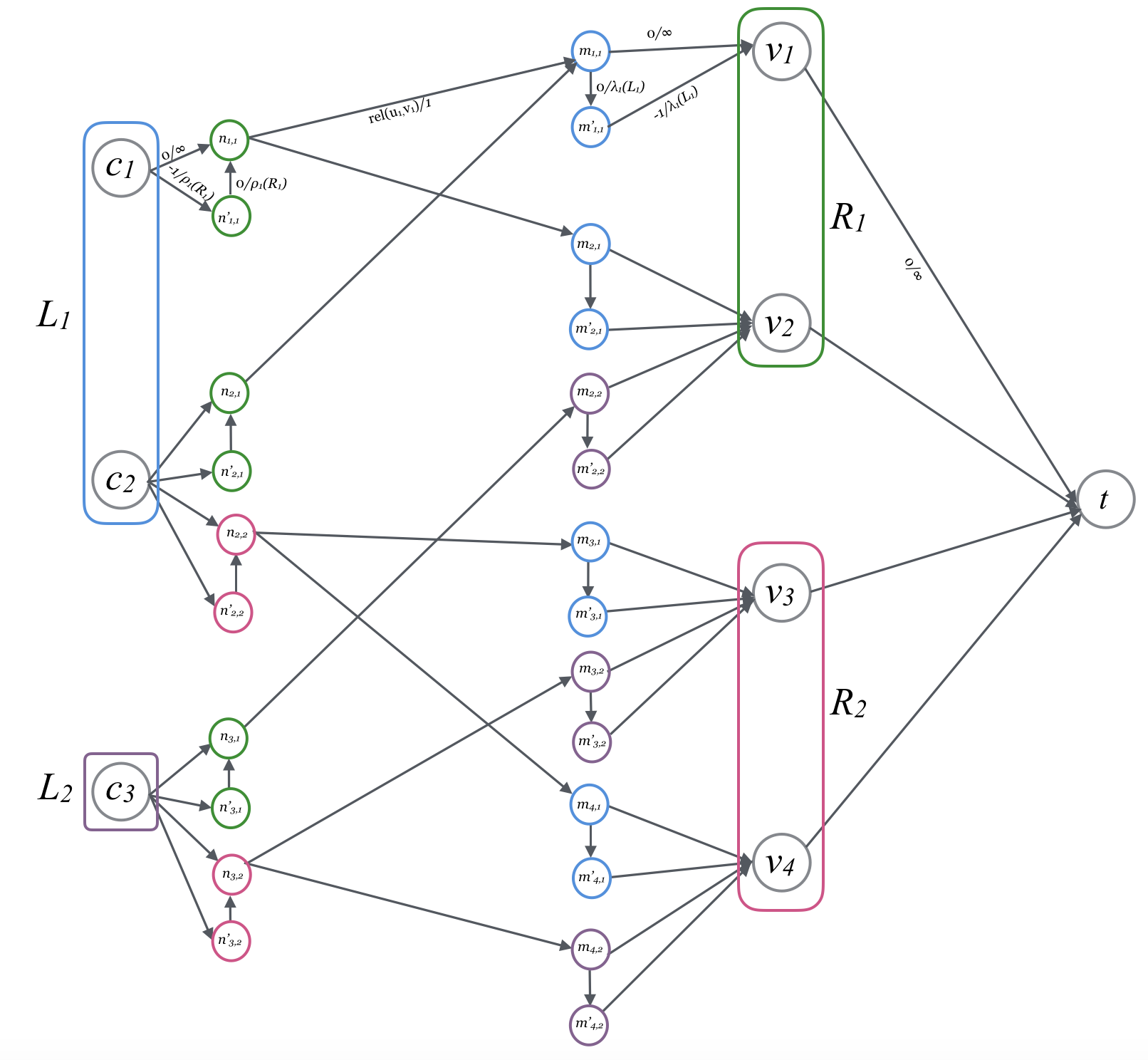}
\caption{Construction of the flow problem in Theorem ~\ref{thm:chapter3_main_disjoint}. \label{fig:chapter3_tudiv_plus_tidiv_diagram}}
\end{figure}


Our results about disjoint categories and type are useful in applications such as news recommendation where users are split into natural categories according to their political alignment, and the news articles and their publishers are split according to the same categorization. However, these results can be applied without any modification to other domains such as retail, where the products (items) are split into natural retail categories according to product ontologies, and where the users are split according to natural mutually exclusive demographic types such as gender or age and income brackets. However, the more general case is the one when categories and items are not necessarily disjoint which we turn to next. 
\begin{figure}	
\centering
 \includegraphics[width=\columnwidth]{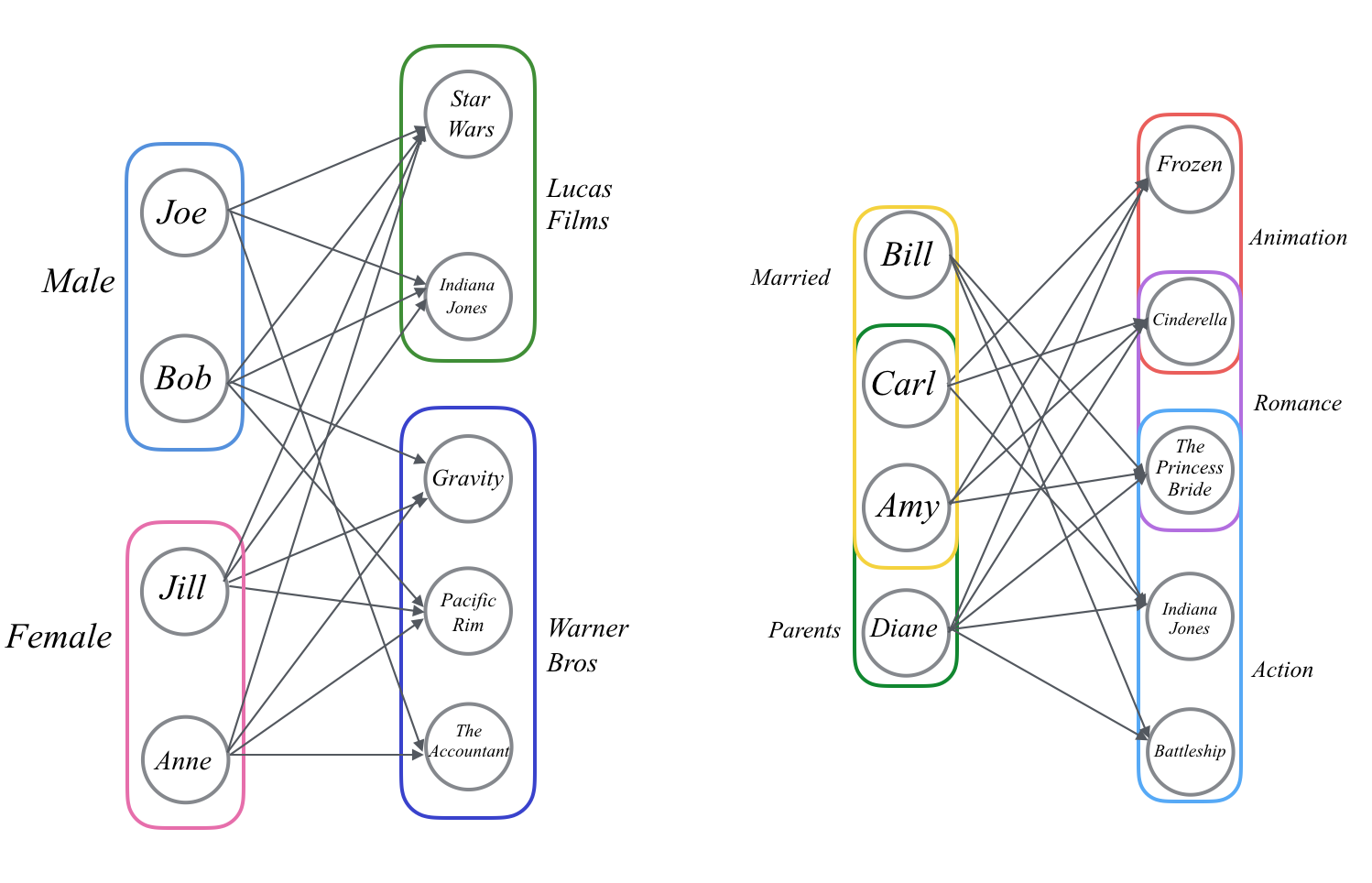}
\caption{(Left) Example showing disjoint user categories (by gender) and disjoint movie types (by production company). (Right) An example showing overlapping user categories by demographic features and overlapping movie types by genre \label{fig:disj}}
\end{figure}

\section{Overlapping Types and Categories}
\label{subsec:thresholded_submodular}

Although cases involving disjoint user-types and categories are solvable in polynomial time, in actual practice, categories of items are not necessarily disjoint, and users may be assigned to more than one user-type. 
We continue to use the notation from Section~\ref{sec-disjoint}, but now user types $\mathcal{L} = \{L_1,L_2,...L_n\}$ on the user set $L$ may overlap as well as  item categories $\mathcal{R} = \{R_1,R_2,...,R_m\}$ on the catalog $R$.
When item categories and user types are non-disjoint, maximizing $TDiv(H)$ is NP-Hard, which can be seen in the following theorem (via a simple reduction from max-coverage).

\begin{theorem}
\label{thm:np-completeness}
Finding an optimal solution to maximize $TDiv_{\beta,\mu}(H)$ with non-disjoint categories and types is NP-Hard.
\end{theorem}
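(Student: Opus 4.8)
The plan is to give a polynomial-time reduction from the \textsc{Maximum Coverage} problem, which is well known to be NP-hard. An instance of \textsc{Maximum Coverage} consists of a universe $U = \{e_1,\dots,e_N\}$, a family of subsets $S_1,\dots,S_m \subseteq U$, and an integer $k$; the goal is to choose $k$ of the sets so as to maximize the number of elements of $U$ lying in their union. Since the contribution summary claims NP-completeness, I would also note that membership in NP for the decision version is immediate, as $TDiv_{\beta,\mu}(H)$ is evaluable in polynomial time; the substance of the theorem is the hardness direction.

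First I would build the bipartite instance of MAX-$TDiv_{\beta,\mu}$. Create a single user $u$ with display constraint $c = k$, and one item $v_j$ for each set $S_j$. Introduce one category $R_a$ for each element $e_a \in U$, and declare $v_j \in R_a$ exactly when $e_a \in S_j$; note that this forces the categories to overlap, which is precisely the non-disjoint regime demanded by the statement. Add the edge $(u,v_j)$ for every item, set every relevance to $0$, set $\mu = 0$ and $\beta = 1$, and set every user-category threshold $\rho_u(R_a) = 1$. The construction is clearly polynomial in the size of the \textsc{Maximum Coverage} instance.

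Next I would evaluate the objective on this instance. Because $\mu = 0$ and all relevances vanish, the objective collapses to $TUDiv(H) = \sum_{R_a} \min(1, \delta_u^H(R_a))$, i.e. the number of categories in which $u$ has at least one recommended neighbor. A feasible subgraph $H$ is just a choice of at most $k$ items (sets), and a category $R_a$ contributes $1$ exactly when some chosen item lies in $R_a$, i.e. when the element $e_a$ is covered by the chosen sets. Hence $TUDiv(H)$ equals the number of elements covered, and since coverage is monotone an optimal $H$ uses all $k$ picks even though the display constraint only caps the degree at $k$. Therefore a subgraph maximizing $TDiv_{\beta,\mu}(H)$ yields an optimal \textsc{Maximum Coverage} solution and conversely, which establishes the NP-hardness and hence (with the NP-membership remark above) the NP-completeness of the decision problem.

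The reduction itself is routine, so the only thing requiring care is the encoding: the correspondence that makes item--category membership the incidence relation of the set system, so that overlap is genuine and the threshold-$1$ term becomes exactly the coverage indicator, together with the observation that zeroing out the item-side weight ($\mu = 0$) keeps the instance inside the stated ``non-disjoint categories and types'' class while isolating the coverage structure. A symmetric construction with $\beta = 0$ and thresholds $\lambda_j(L_b) = 1$ shows that maximizing $TIDiv$ alone is hard as well, so neither term of the combined objective admits an exact polynomial-time algorithm in the overlapping case.
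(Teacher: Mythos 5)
Your reduction is correct and is essentially the paper's own proof: the paper likewise reduces from Max-Cover with a single user of display constraint $c_1 = k$, one item per set, one category $R_a$ per element consisting of exactly the items whose sets contain that element, and thresholds $\rho(R_a)=1$, so that $TUDiv$ counts covered elements. If anything your write-up is slightly tidier than the paper's, since by setting $\mu=0$ and all relevances to $0$ you explicitly neutralize the $TIDiv$ and $rel(H)$ terms that the paper (which fixes $\beta=\mu=1$ and says nothing about the item-side thresholds or relevance) leaves implicit, and your NP-membership remark supplies the easy half of the ``NP-complete'' claim made in the contributions section.
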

\begin{proof}
We fix $\beta = \mu = 1$ since proving the NP-Hardness of a special case is sufficient. We show that optimizing just $TUDiv_{\beta,\mu}(H)$ with a single user is NP-Hard with the following reduction from the Max-Cover problem, a well known NP-Hard problem: Given a set of elements $\{1,2,3,...,n\}$, a collection of $m$ sets $\mathcal{S}$, and an integer $k$, we want to find the largest number of elements covered by at most $k$ sets.

We construct a bipartite graph $G(L\cup R, E)$ where $|L| = 1$ and $|R| = m$ with the items representing sets. The vertex $u \in L$ has an out-degree of $m$, one for each vertex in $R$. We then create subsets $R_1,R_2,R_3,...,R_n \subseteq R$ with one such subset corresponding to each element $e_i$ in $\{1,2,3,...,n\}$: the subset of vertices in $R_i$ correspond to the sets in $\mathcal{S}$ that contain the element $e_i$. We also set $\{\rho_1(R_i)\}_{i=1}^n = 1$.
We let $c_1 = k$. An optimal solution for $TUDiv(H)$ would give an optimal solution to Max-Cover, since finding the maximum number of categories hit with $k$ edges out of $L$ would find the maximum number of elements we can cover with $k$ sets.

Since finding the optimal solution to $TUDiv_{\beta,\mu}(H)$ is NP-Hard, and $TUDiv(H)$ is a special case of $TDiv_{\beta,\mu}(H)$, we have shown that optimizing $UserDiv(H)$ will also be NP-Hard, thus proving the desired result.
\end{proof}

\begin{algorithm}
 \caption{The greedy algorithm for $TIDiv$ and $TUDiv$ maximization}
\label{alg:chapter3greedy}
\begin{algorithmic}
\STATE \textbf{Data:} A bipartite graph $G=(L,R,E)$ and display constraint $c$
  \STATE \textbf{Result:} A solution graph $H$ maximizing $TDiv_{\beta,\mu}(H)+rel(H)$
 \WHILE{some vertex $u_i\in L$ has $deg_H(u_i)< c_i$}
\STATE  $(u_i,v_j) = e \longleftarrow \arg \max_{e'\in E} ~ TDiv_{\beta,\mu}(H\cup\{e'\})- TDiv_{\beta,\mu}(H) + rel(e')$\;
  \IF {$deg_H(u_i) < c$} 
  \STATE $H \gets H\cup\{e\}$ \ENDIF \ENDWHILE
  \RETURN H;
\end{algorithmic}
\end{algorithm}

Since we are not able to maximize $TDiv_{\beta,\mu}(H)$ optimally, we can make use of the fact that $TDiv_{\beta,\mu}(H)$ is both monotone and submodular, which will allow us to apply a greedy algorithm which will yield a $(1-1/e)$-approximation ratio.

\begin{proposition}
$TUDiv(H)$ is submodular.
\end{proposition}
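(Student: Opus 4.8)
The plan is to prove submodularity summand by summand and then invoke the fact that a nonnegative linear combination of submodular functions is itself submodular. Recall that a set function $f:2^E\to\mathbb{R}$ on the candidate edge set $E$ is submodular precisely when it has diminishing marginal returns: for every $A\subseteq B\subseteq E$ and every edge $e\in E\setminus B$,
\[ f(A\cup\{e\})-f(A)\ \ge\ f(B\cup\{e\})-f(B). \]
Since $TUDiv(H)=\sum_{u_i\in L}\sum_{R_a}\min(\rho_i(R_a),\delta_i^H(R_a))$ is a sum with unit (hence nonnegative) coefficients of the terms $f_{i,a}(H)=\min(\rho_i(R_a),\delta_i^H(R_a))$, it suffices to show that each $f_{i,a}$ is submodular.

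Next I would isolate the structure of a single term. Fix a user $u_i$ and a category $R_a$ and write $g_{i,a}(H)=\delta_i^H(R_a)=|\{v_j\in R_a:(u_i,v_j)\in H\}|$ for the number of edges of $H$ that leave $u_i$ and enter $R_a$. This count is a modular (additive) set function: adding an edge increases $g_{i,a}$ by exactly one if that edge is of the form $(u_i,v_j)$ with $v_j\in R_a$ and leaves it unchanged otherwise, so in particular $g_{i,a}$ is monotone nondecreasing in $H$. The term itself is the composition $f_{i,a}=\phi\circ g_{i,a}$, where $\phi(x)=\min(\rho_i(R_a),x)$ is a concave, nondecreasing scalar function. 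The overlap of categories is harmless here: an edge $(u_i,v_j)$ whose endpoint $v_j$ lies in several categories simply contributes to several of the independent summands $f_{i,a}$, and submodularity of the total still follows from closure under sums.

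The core computation is the diminishing-returns inequality for $f_{i,a}$, which I would verify directly. For an edge $e$ that is not of the form $(u_i,v_j)$ with $v_j\in R_a$, the marginal gain $f_{i,a}(H\cup\{e\})-f_{i,a}(H)$ is $0$ for every $H$, so the inequality is trivial. For a relevant edge $e=(u_i,v_j)$ with $v_j\in R_a$ and $e\notin H$ we have $g_{i,a}(H\cup\{e\})=g_{i,a}(H)+1$, so the marginal gain equals $\phi(g_{i,a}(H)+1)-\phi(g_{i,a}(H))$, which is $1$ when $g_{i,a}(H)<\rho_i(R_a)$ and $0$ otherwise. Because $g_{i,a}$ is monotone, $A\subseteq B$ gives $g_{i,a}(A)\le g_{i,a}(B)$, and the step map $x\mapsto\phi(x+1)-\phi(x)$ is nonincreasing; hence the marginal gain at $A$ is at least the marginal gain at $B$. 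This is exactly the diminishing-returns inequality, so each $f_{i,a}$ is submodular, and summing establishes that $TUDiv$ is submodular.

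I do not expect a genuine obstacle in this argument; the only point that requires care is recognizing that each term is a concave, nondecreasing function (a $\min$ with a constant threshold) of a modular edge-count, and that composing such a concave function with a modular function preserves submodularity. Once that reduction is in place, the proof is the short marginal-gain check above.
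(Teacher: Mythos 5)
Your proof is correct, but it takes a more structural route than the paper's. The paper verifies the diminishing-returns inequality for $TUDiv$ directly and globally: for edge sets $X\subseteq Y$ and an edge $e=(u_i,v_j)\notin Y$, it observes that the marginal gain $TUDiv(X\cup\{e\})-TUDiv(X)$ counts the categories containing $v_j$ that are still strictly below their threshold in $X$, and that this count can only shrink when $X$ grows to $Y$. You instead decompose $TUDiv$ into the per-(user, category) summands $f_{i,a}(H)=\min\bigl(\rho_i(R_a),\delta_i^H(R_a)\bigr)$, recognize each as a nondecreasing concave $\phi$ composed with the modular edge count $g_{i,a}$, prove that composition submodular via the marginal-gain computation, and finish by closure of submodularity under nonnegative sums. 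The core computation is the same in both proofs --- comparing the truncated degree before and after adding $e$ --- but your packaging buys two things: it makes rigorous the paper's somewhat loose phrase about the categories that $e$ ``will saturate'' (your per-term analysis pins down exactly when the unit gain is truncated, including the overlapping-category case where $e$ contributes to several summands at once), and it generalizes immediately to any concave nondecreasing per-category reward rather than just truncation at a threshold. One small caveat: your claim that the step $\phi(g+1)-\phi(g)$ equals $1$ when $g<\rho_i(R_a)$ and $0$ otherwise implicitly assumes integer thresholds; with real-valued thresholds the step can be fractional, but your argument only uses that the step map $x\mapsto\phi(x+1)-\phi(x)$ is nonincreasing, which concavity guarantees, so nothing breaks.
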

\begin{proof}
Let $X$ and $Y$ be two sets of edges such that $X \subseteq Y$ and let $e$ be an edge not in $X$ or $Y$. Consider the quantity $TUDiv(X \cup \{e\}) - TUDiv(X)$. Observe that this is the number of categories $R_a$ that $e$ will saturate (not including categories that have already reached their threshold). This will be at least as much as the number of categories $e$ saturates in $Y$, since $Y$ could contain edges that have already saturated categories that $e$ would saturate. It follows that $TUDiv(X \cup \{e\}) - TUDiv(X) \geq TUDiv(Y \cup \{e\}) - TUDiv(Y)$. This satisfies the ``diminishing returns" property of submodular functions. Therefore $TUDiv(H)$ is submodular.
\end{proof}

We get the following from a symmetric argument.
\begin{proposition}
$TIDiv(H)$ is submodular.
\end{proposition}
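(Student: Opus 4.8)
The plan is to mirror the submodularity proof just given for $TUDiv(H)$, since the statement explicitly claims the result follows ``from a symmetric argument.'' The two objectives are built the same way but with the roles of users and items, and of types and categories, interchanged. Concretely, $TUDiv(H) = \sum_{u_i \in L} \sum_{R_a} \min(\rho_i(R_a), \delta_i^H(R_a))$ sums over users a per-user, per-category thresholded coverage term, while $TIDiv(H) = \sum_{v_j \in R} \sum_{L_b} \min(\lambda_j(L_b), \delta_j^H(L_b))$ sums over items a per-item, per-type thresholded coverage term. So I would begin by observing that $TIDiv$ is obtained from $TUDiv$ by swapping $u_i \leftrightarrow v_j$, $\mathcal{R} \leftrightarrow \mathcal{L}$, and $\rho \leftrightarrow \lambda$, and that an edge $e = (u_i, v_j)$ affects the $TIDiv$ sum only through the item endpoint $v_j$ and the type $L_b = type(u_i)$ of its user endpoint.

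First I would fix two edge sets $X \subseteq Y$ and an edge $e = (u_i, v_j) \notin Y$, and consider the marginal gain $TIDiv(X \cup \{e\}) - TIDiv(X)$. The key observation is that adding $e$ changes exactly one term in the outer sum, namely the term for item $v_j$, and within that only the inner terms for user-types $L_b$ containing $u_i$. For each such type $L_b$, the contribution increases by $1$ precisely when $\delta_j^X(L_b) < \lambda_j(L_b)$ (the threshold for that item-type pair has not yet been reached), and by $0$ otherwise. Hence $TIDiv(X \cup \{e\}) - TIDiv(X)$ counts the number of not-yet-saturated item-type pairs that $e$ saturates one unit of. Since $X \subseteq Y$ implies $\delta_j^X(L_b) \le \delta_j^Y(L_b)$ for every item and type, any pair already at its threshold in $X$ is also at its threshold in $Y$, so the set of pairs for which $e$ yields a positive marginal can only shrink as we pass from $X$ to $Y$.

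From this monotonicity of saturation I would conclude the diminishing-returns inequality $TIDiv(X \cup \{e\}) - TIDiv(X) \ge TIDiv(Y \cup \{e\}) - TIDiv(Y)$, which is exactly submodularity. The argument is a direct transcription of the $TUDiv$ proof under the user/item symmetry, so no genuinely new obstacle arises. The only point that warrants care is making precise why the inequality $\delta_j^X(L_b) \le \delta_j^Y(L_b)$ holds: it follows because neighbor counts are monotone in the edge set, and this is what guarantees that a threshold reached under the smaller set is also reached under the larger one. I do not expect any step to be difficult; the main task is simply to state the symmetry cleanly and verify that the thresholded $\min$ structure behaves correctly, exactly as in the companion proposition.
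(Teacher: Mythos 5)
Your proposal is correct and follows essentially the same route as the paper's own proof, which likewise fixes $X \subseteq Y$ and an edge $e \notin Y$, identifies the marginal gain $TIDiv(X \cup \{e\}) - TIDiv(X)$ as the number of not-yet-saturated item--type pairs that $e$ advances, and concludes diminishing returns because saturation under $X$ implies saturation under $Y$. Your version is in fact slightly more careful than the paper's, since you make explicit the monotonicity $\delta_j^X(L_b) \le \delta_j^Y(L_b)$ that the paper leaves implicit.
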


\ifdefined\THESIS
\begin{proof}
We again consider edge sets $X$ and $Y$ where $X \subseteq Y$ and let $e$ be an edge not yet in $X$ or $Y$. Observe that the quantity $TIDiv(X \cup \{e\}) - TIDiv(X)$ counts the number of types $L_i$ that $e$ saturates (not counting user types that have already reached their threshold). This will be at least as much as the number of types $e$ saturates in $Y$, since $Y$ could contain edges that saturate types that $e$ saturates. This means that $TIDiv(X \cup \{e\}) - TIDiv(X) \geq TIDiv(Y \cup \{e\}) - TIDiv(Y)$. Therefore, $TIDiv(H)$ is submodular.
\end{proof}
\fi

\begin{corollary}
$TDiv_{\beta,\mu}(H)$ is submodular.
\end{corollary}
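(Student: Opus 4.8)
The plan is to derive this corollary immediately from the two preceding propositions, using the standard fact that the family of submodular set functions is closed under nonnegative linear combinations. First I would recall the definition $TDiv_{\beta,\mu}(H) = \beta \cdot TUDiv(H) + \mu \cdot TIDiv(H)$, so that the target function is literally a weighted sum of the two functions whose submodularity has just been established.

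Next I would verify the closure property directly rather than cite it, since the argument is short and matches the diminishing-returns formulation already used in the paper. Take edge sets $X \subseteq Y$ and an edge $e \notin Y$. Submodularity of $TUDiv$ and of $TIDiv$ gives
\[ TUDiv(X \cup \{e\}) - TUDiv(X) \geq TUDiv(Y \cup \{e\}) - TUDiv(Y), \]
\[ TIDiv(X \cup \{e\}) - TIDiv(X) \geq TIDiv(Y \cup \{e\}) - TIDiv(Y). \]
Multiplying the first inequality by $\beta$ and the second by $\mu$ and adding them yields exactly the diminishing-returns inequality for $TDiv_{\beta,\mu}$, which is what we want.

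The only genuine point to watch is that this addition step requires $\beta, \mu \geq 0$: multiplying a valid inequality by a negative scalar reverses it, and submodularity is not preserved under subtraction or negative weights. So the single real hypothesis the argument depends on is the nonnegativity of the two trade-off parameters, which holds in our setting since $\beta$ and $\mu$ are diversity weights rather than arbitrary reals. Everything else is routine, so I do not expect any substantive obstacle here; the corollary is essentially a bookkeeping consequence of the two submodularity propositions.

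Finally, I would note that this submodularity, together with the monotonicity of $TDiv_{\beta,\mu}$ and the partition-matroid structure of the display constraints $\{c_i\}$, is precisely what licenses the $(1-1/e)$-approximation guarantee for the greedy algorithm (Algorithm~\ref{alg:chapter3greedy}) in the overlapping case, which is the reason the corollary is stated at this point.
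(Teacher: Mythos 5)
Your proof is correct and follows the same route as the paper, which likewise deduces the corollary from the two propositions via closure of submodular functions under nonnegative linear combinations. Your explicit flagging of the hypothesis $\beta,\mu \geq 0$ is a worthwhile touch, since the paper calls these ``real-valued parameters'' and leaves the nonnegativity requirement implicit.
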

\ifdefined\THESIS
\begin{proof}
Since both $TUDiv(H)$ and $TIDiv(H)$ are submodular, and non-negative linear combinations of submodular functions are also submodular, $TDiv(H)$ will also be submodular.
\end{proof}
\fi

\begin{corollary}
The objective function $rel(H)$ is submodular.
\end{corollary}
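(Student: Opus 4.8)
The plan is to observe that $rel(H)$ is a \emph{modular} function of the chosen edge set, which is the strongest form of submodularity, and then to record the marginal-gain computation explicitly so that the diminishing-returns inequality holds trivially (indeed with equality).

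First I would recall the definition $rel(H) = \sum_{(u_i,v_j) \in H} rel(u_i,v_j)$, in which each $rel(u_i,v_j)$ is a fixed relevance weight attached to the edge $(u_i,v_j)$. Since these weights do not depend on which other edges are present in $H$, the marginal gain of adding a single edge $e$ to any edge set $X$ is simply $rel(X \cup \{e\}) - rel(X) = rel(e)$, independent of $X$.

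Next, taking any two edge sets $X \subseteq Y$ and any edge $e \notin Y$, this computation gives $rel(X \cup \{e\}) - rel(X) = rel(e) = rel(Y \cup \{e\}) - rel(Y)$, so the diminishing-returns inequality $rel(X \cup \{e\}) - rel(X) \geq rel(Y \cup \{e\}) - rel(Y)$ required for submodularity holds --- in fact with equality, reflecting that $rel$ is both submodular and supermodular.

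There is no genuine obstacle in this proof; the only point worth stressing is its purpose. Combined with the submodularity of $TUDiv$ and $TIDiv$ established above, and the fact that non-negative linear combinations of submodular functions remain submodular, this shows that the full greedy objective $TDiv_{\beta,\mu}(H) + rel(H)$ is submodular and monotone, which is exactly what underwrites the $(1-1/e)$-approximation guarantee of Algorithm~\ref{alg:chapter3greedy}.
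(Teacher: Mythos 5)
Your proof is correct and follows essentially the same route as the paper's: the paper also observes that $rel(H)$ is a linear (hence modular) function of the edge set and that modularity implies submodularity. Your explicit marginal-gain computation $rel(X \cup \{e\}) - rel(X) = rel(e)$ simply spells out the same fact in more detail.
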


\ifdefined\THESIS
\begin{proof}
The sum of the relevance values of all the recommendations is a linear function. Linear functions are modular and hence also submodular, and submodular functions are closed under addition, so this function is submodular as well.
\end{proof}
\fi

The monotonicity and the submodularity of the objective now allows us to write the greedy algorithm given in Algorithm 1.

Stated in its current form, the greedy algorithm takes $O(|E|^2)$ to run. However, it is possible to speed it up significantly by using better data structures. 

\begin{theorem}
\label{thm:submodular_runtime}
Let $R_1, \ldots, R_k$ be the set of overlapping categories and $L_1, \ldots, L_p$ be the set of overlapping types for a $TDiv$ maximization problem. Then the greedy algorithm can be implemented to run in time, $O((E+\sum_{a=1}^k R_a +\sum_{b=1}^p L_b)\log |E|)$.
\end{theorem}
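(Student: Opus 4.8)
The plan is to replace the naive $O(|E|^2)$ recomputation of marginal gains with a single max-heap $Q$ of candidate edges whose keys are maintained to equal the \emph{exact} current marginal gain $g(e)$, supported by auxiliary counters that force each key to change only at a handful of well-identified ``saturation'' events. For an edge $e=(u_i,v_j)$ the gain decomposes as $g(e)=rel(u_i,v_j)+\beta\sum_{a:\,v_j\in R_a}\mathbf{1}[\delta_i^H(R_a)<\rho_i(R_a)]+\mu\sum_{b:\,u_i\in L_b}\mathbf{1}[\delta_j^H(L_b)<\lambda_j(L_b)]$, so the contribution of a category $R_a$ (resp.\ a type $L_b$) is exactly $\beta$ (resp.\ $\mu$) while $\delta_i^H(R_a)<\rho_i(R_a)$ and drops to $0$ the instant the count reaches the threshold. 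The first fact I would isolate is that inserting an edge into $H$ alters $g(e')$ for another edge $e'$ \emph{only} when one of $e'$'s categories or types crosses its threshold; a non-saturating counter increment leaves every marginal gain unchanged. This is exactly what makes eager maintenance affordable rather than forcing lazy re-evaluation.

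For data structures I would keep, for each user $u_i$ and category $R_a$ it touches, the running count $\delta_i^H(R_a)$ and a bucket $A_{i,a}$ listing the edges from $u_i$ into $R_a$, and symmetrically $\delta_j^H(L_b)$ with a bucket $B_{j,b}$ for each item $v_j$ and type $L_b$. Precomputing the per-item category lists and per-user type lists, then distributing each edge to the categories of $v_j$ and the types of $u_i$, builds all buckets in $O(\sum_a R_a+\sum_b L_b)$ time, where $\sum_a R_a$ is read as the number of edge--category incidences $\sum_{(u_i,v_j)\in E}|\mathrm{cat}(v_j)|$ (and likewise for types). Given the precomputed per-item/per-user counts, the initial key $g(e)=rel(e)+\beta\,|\mathrm{cat}(v_j)|+\mu\,|\mathrm{type}(u_i)|$ of every edge is obtained in $O(1)$, so $Q$ is constructed in $O(|E|)$.

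The main loop runs until every user is saturated. At each step I extract the maximum-key edge $e=(u_i,v_j)$, add it to $H$, and update the state: for each category $R_a\ni v_j$ I increment $\delta_i^H(R_a)$, and only if this makes $\delta_i^H(R_a)=\rho_i(R_a)$ do I walk the bucket $A_{i,a}$ and issue a decrease-key of $\beta$ on each of its edges still in $Q$; the type side is handled by the symmetric substitution $\beta\!\to\!\mu$, $R_a\!\to\!L_b$, $\rho\!\to\!\lambda$, $A\!\to\!B$. When $\deg_H(u_i)$ reaches $c_i$ I delete $u_i$'s remaining edges from $Q$. Because the keys are kept exactly equal to the current gains, every extraction is a valid greedy choice, so correctness reduces to the monotone--submodular guarantee already established in the preceding propositions.

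The crux of the running-time bound, and the step I expect to be the real obstacle, is controlling the total number of decrease-key operations. Here I would invoke the at-most-once property: a pair $(u_i,R_a)$ crosses its threshold only once over the entire run, and the edges of $A_{i,a}$ are touched only at that single moment; hence the total number of category-side key updates is at most $\sum_{u_i}\sum_a |A_{i,a}|=\sum_{(u_i,v_j)\in E}|\mathrm{cat}(v_j)|$, which is $\sum_a R_a$ under the edge-incidence reading, with $\sum_b L_b$ arising identically on the type side. Counter increments and edge deletions are charged against the same quantities and are dominated by them, and since each heap operation costs $O(\log|E|)$ we obtain the claimed $O\big((|E|+\sum_a R_a+\sum_b L_b)\log|E|\big)$. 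The two points needing care when writing this up are (i) verifying that a touch of an already-removed edge inside a bucket costs only $O(1)$, so the incidence bound survives, and (ii) confirming rigorously that no gain can change except at a saturation event — precisely the observation isolated in the first paragraph.
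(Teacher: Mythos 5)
Your proposal is essentially the paper's own argument: a max-heap keyed by the exact marginal gain, with decrease-key by $\beta$ (resp.\ $\mu$) triggered only when a user--category (resp.\ item--type) threshold saturates, the at-most-once saturation property bounding the total number of key updates by $\sum_a R_a + \sum_b L_b$, and $O(\log|E|)$ per heap operation yielding the stated bound. Your per-pair buckets $A_{i,a}$, $B_{j,b}$ and your explicit reading of $\sum_a R_a$ as edge--category incidences merely make precise the bookkeeping the paper leaves implicit.
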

\begin{proof}
Let $u\in L$, $v\in R$, and let $u,v \in G$ be a candidate recommendation. The category contribution of this edge to a partial solution $H$ is the number of categories $R_i$ that $v$ belongs to, for which $\rho(R_i) < \delta_L^H(R_i)$ is satisfied. Similarly, the type contribution of this edge is the number of types $L_j$ that $u$ belongs to, for which  $\lambda(L_j) < \delta_R^H(L_j)$. While constructing the solution, both of these quantities can only decrease. Furthermore, we are only ever interested in the node with the highest marginal contribution.

Therefore, we can keep track of the potential contribution of each edge in a max-heap. Initially, the priority of each edge is set to be the number of categories and number types it covers. Each time an edge meets a category target, we decrease the priority of every unused edge incident on that category by $\beta$. Similarly, when a user type target is satisfied, we decrease the priority of every unused edge incident on that type by $\mu$. Both operations take logarithmic time using a heap which supports the decrease-key operation. This operation is performed at most once for each type and category.

This means that we are maintaining a max-heap with $|E|$ elements, removing the maximal element $|E|$ times, and decreasing the key of some edge by at most $\sum_{a=1}^k R_a +\sum_{b=1}^p L_b$ times. Both of these operations can be done in $O(\log |E|)$ time, which gives us the desired runtime.
\end{proof} 
 f\section{Experimental Setup}
\label{sec:experimental_setup}

\subsection{Datasets}
\textbf{Category Data:} We use ratings data as well as type and category data from the MovieLens-1m dataset, and additional category data from IMDB. For disjoint user types in the MovieLens dataset, we use three different demographic data points included in the data: age-group (6 different values), gender (2 different values), and occupation (19 different values) each of which form a partition the user set.

\textbf{Supergraph Generation:}
We used the MovieLens-1m \cite{MLData} rating dataset to generate the graph we fed to our algorithms. (The data set can be downloaded \href{http://grouplens.org/datasets/movielens/1m/}{here.}) We pre-processed the dataset to ensure that every user and every item has an adequate amount of data on which to base predictions. This post processing left the MovieLens-1m data with 5800 users and 3600 items.
The use of this dataset is standard in the recommender systems literature. In this \ifdefined\THESIS chapter \else work \fi, we consider the rating data to be triples of the form $(user,item,rating)$, and discard any extra information.

Each dataset was processed in two different ways, once for experiments involving disjoint categories and once for experiments involving overlapping categories. In each case, the full dataset was filtered for items for which the category information was known. Each of these were then split 5-ways into holdout test sets and training sets. Only users for which more than 50 ratings were considered for inclusion in the test, and we denote this set of users by $L_T\subseteq L$. The training sets were then fed into a matrix factorization algorithm due to Hu et al. \cite{hu2008collaborative} with 50 latent factors. We set the input confidence value parameter $\alpha$ in their method to the value of 40, as recommended by the authors, and performed a grid search for the best regularization parameter $\lambda$ using 5-fold cross validation. Using the resulting user and item factor matrices, for each user we predicted the ratings of all the items for which the user did not provide feedback in the training test. Among these predicted ratings, we retained the 250 highest rated items along with their predicted ratings to feed into our algorithms.

\subsection{Quality evaluation}

We measure the effectiveness of our algorithms and others' along several orthogonal dimensions. For relevance, we report precision values, i.e. the fraction of items in the recommendation set that match items given in the test set. Formally, if we denote the set of recommendations given to a user in subgraph $H$ as $N(u_i)$ and the set of relevant held-out items for the user as $T(u_i)$, we define precision as follows.
\[ P = \frac{1}{|L_T|}\sum_{u_i \in L_T} \frac{|N(u_i) \cap T(u_i)|}{c_i}\]

In this paper, a held-out item is considered to be relevant to a user in our evaluation if its assigned rating was 3 or higher. 
We note however that this notion of precision is conditioned on the inherent diversity already represented by the ratings in the MovieLens-1M database, and hence may not be ideal. 
Therefore, aside from relevance based metrics, we also report two sales diversity metrics: aggregate diversity and the Gini index.

\ifdefined\THESIS
The definitions of these metrics can be found in ~\ref{Chapter3}.
\else
Aggregate diversity is simply the fraction of items in the catalog which have been recommended to at least one user, and it measures coverage. The Gini index measures how inequitable the recommendation distribution is. More concretely, if the degree distribution of the items is given as a sorted list $\{d_i\}_{i=1}^r$, then the Gini index is defined as follows.
\[ G = 1-\frac{1}{r}\left( r+1-2\frac{\sum_{i=1}^r (r+1-i)d_i}{\sum_{i=1}^r d_i}\right) \]
\fi

Finally, we report the objectives for which our methods explicitly optimize. These are ERR-IA for the xQuAD reranker, ILD for the MMR reranker, Binomial Diversity for the Binomial Diversity reranker. Among these, only the Binomial Diversity reranking method takes a parameter $\alpha$, which corresponds to a personalization parameter. The authors use the value $\alpha=0.5$ in their experimental evaluation \cite{vargas2014coverage}, and we also use this setting. We measure each of these metrics as well as our own $TUDiv$ and $TIDiv$ as they are measured among only the relevant items in the test set.

As mentioned in Subsection ~\ref{subsec:thresholded_disjoint}, for the $TUDiv$ and $TIDiv$ metrics, we set the thresholds using the training data. In particular, for the case of disjoint categories, we count the number of times each category appears in a user's training set, normalize these values to sum to the display constraint, and round to integer values. For the case of overlapping categories, we perform the same operation, but normalize the thresholds to sum to the display constraint times the average number of categories for an item in the training set. In the case of disjoint types, we again set the type thresholds proportional to the distribution of types found in the training data, but normalize the distribution to sum to 20\% of the average number of recommendations an item would have received if every item were equally promoted by the recommender system. This allows the measure to promote sales diversity among items, while respecting its interaction history with the users.

Note that setting the thresholds using the proportions in the training data inherently biases the distribution to which we are targeting the final diversity to, and makes it match the distribution in the overall training set. 
Despite this, we chose these thresholds so they match the precision measure that we use to evaluate the effectiveness of our methods.
Note that these thresholds can also be set by a designer who prefers to move the proportions of different categories for different users in a different direction that is found in the training data (and symmetrically for the items), but it would be difficult to evaluate the effectiveness of the resulting lists against other methods.

In our tables, we abbreviate the names of these metrics as $P$ for Precision, $A$ for aggregate diversity, $G$ for the Gini index, $BD$ for Binomial Diversity, and $ILD$ for intra-list distance. The number next to each metric denotes the cutoff at which it was evaluated.

\vspace{-.2cm}

\subsection{Baselines}
We compare our method against 3 baselines methods: the Binomial Diversity reranker due to Vargas et al. \cite{vargas2014coverage}, the MMR reranker due to Carbonell et. al \cite{carbonell1998use}, and the xQuAD algorithm due to Santos et al. \cite{santos2013explicit}. Each method takes a parameter $\lambda\in[0,1]$ which trades off relevance with the metric which is being optimized. For each of these methods, we performed a grid search for the best trade-off parameter, and report all the measurements for the setting which produced the best results for the method's corresponding metric. Since our algorithms have two trade-off parameters $\mu$ and $\beta$ in the objective $rel(H) + TDiv_{\beta,\mu}$ corresponding to two different metrics, we perform a grid search along both dimensions and report the two solutions which maximize $TIDiv$ and $TUDiv$ respectively. We additionally report the same metrics for the undiversified recommendation lists provided by the matrix factorization method under the heading ``TOP''.

\vspace{-.2cm}

\subsection{Software}
For the matrix factorization based recommender we trained, we used the implementation of Hu's matrix factorization method found in Ranksys \cite{sandoval2015novelty}. The baseline methods we compare against are also implemented in the same library. Our methods and metrics were implemented in a way to be compatible with the same library.
Additionally, we used a minimum cost network flow optimizer written by Bertolini and Frangioni~\cite{Frangioni2010MCFSimplex}.
The code we used for our experiments can be found in {\href{https://github.com/antikacioglu/salesdiversity/tree/master/Category}{our  repository}.
\vspace{-.2cm} 
\section{Experiments}
\label{exp_results}
In this section we report our findings on diversifying recommendations in MovieLens derived recommendation problems. \ifdefined\RECSYS We omit the results from the Netflix dataset for lack of space. \fi Our findings can be summarized as follows:

\ifdefined\THESIS
\begin{enumerate}
\else
\begin{enumerate}[leftmargin=1em,labelwidth=*,align=left]
\fi
\item In the setting of overlapping item categories, the greedy algorithm leveraging the submodularity of the $TDiv$ objective obtains significant gains in the $TIDiv$ and $TUDiv$ recommendation diversity metrics. Our algorithm preserves or improves the accuracy of the baseline recommender system, while also increasing sales diversity metrics.

\item In the setting of disjoint item categories, we show that the flow based algorithm obtains solutions which have higher predictive accuracy and higher sales diversity measurements. However, the differences are small enough for the greedy algorithm to make a suitable replacement for the more expensive, flow-based optimization technique.

\item The greedy algorithm is faster than competing diversification techniques, making it suitable for large scale recommendation tasks, provided that the heap used in its implementation can fit in memory.

\end{enumerate}

\subsection{Experiments on Overlapping Categories}
\label{sec:chap3overlap}

\begin{sidewaystable}
\begin{center}
\begin{tabular}{ |c|c|c|c|c|c|c|c|c| }
\hline
Method                           & P@20 & ERR-IA@20 & ILD@20 & BD@20 & TUDiv@20 & TIDiv@20 & A@20 & G@20 \\ \hline
TOP                              & 0.244 & 0.191 & 0.157 & 0.203 & 0.169 & 0.178 & 0.386 & 0.082 \\ \hline
xQuAD    ($\lambda=0.4$)         & \textbf{0.264} & \textbf{0.236} & 0.165 & 0.224 & 0.207 & 0.139 & 0.368 & 0.072 \\ \hline
MMR      ($\lambda=0.4$)         & 0.233 & 0.183 & \textbf{0.172} & 0.227 & 0.158 & 0.139 & 0.371 & 0.075 \\ \hline
BD       ($\lambda=0.6$)         & 0.227 & 0.208 & 0.156 & \textbf{0.255} & 0.152 & 0.176 & 0.384 & 0.084 \\ \hline
Greedy   ($\beta=0.4, \mu=4.0$)  & 0.252 & 0.201 & 0.163 & 0.238 & \textbf{0.210} & 0.189 & 0.420 & 0.087 \\ \hline
Greedy   ($\beta=4.0, \mu=0.2$) & 0.245 & 0.200 & 0.158 & 0.237 & 0.203 & \textbf{0.214} & \textbf{0.559} & \textbf{0.107} \\ \hline
\end{tabular}
\caption{MovieLens diversifications for artistic genre based categories and age group based types. The best value in each metric is bolded. \label{tbl:chapter3_overlap_table1}}
\smallskip

\begin{tabular}{ |c|c|c|c|c|c|c|c|c| }
\hline
Method                             & P@20 & ERR-IA@20 & ILD@20 & BD@20 & TUDiv@20 & TIDiv@20 & A@20 & G@20 \\  \hline
TOP                                & 0.244 & 0.191 & 0.157 & 0.203 & 0.169 & 0.148 & 0.386 & 0.082 \\ \hline
xQuAD    ($\lambda=0.4$)           & \textbf{0.264} & \textbf{0.236} & 0.165 & 0.224 & 0.207 & 0.139 & 0.368 & 0.072 \\ \hline
MMR      ($\lambda=0.4$)           & 0.233 & 0.183 & \textbf{0.172} & 0.227 & 0.158 & 0.139 & 0.371 & 0.075 \\ \hline
BD       ($\lambda=0.6$)           & 0.227 & 0.208 & 0.156 & \textbf{0.255} & 0.152 & 0.144 & 0.384 & 0.084 \\ \hline
Greedy   ($\beta=0.4, \mu=4.0$)    & 0.253 & 0.201 & 0.163 & 0.238 & \textbf{0.210} & 0.157 & 0.412 & 0.086 \\ \hline
Greedy   ($\beta=4.0, \mu=0.2$)    & 0.246 & 0.201 & 0.159 & 0.237 & 0.204 & \textbf{0.181} & \textbf{0.545} & \textbf{0.104} \\ \hline
  \end{tabular}
\caption{MovieLens diversifications for artistic genre based categories and occupation based types. The best value in each metric is bolded. \label{tbl:chapter3_overlap_table2}}
\smallskip

\begin{tabular}{ |c|c|c|c|c|c|c|c|c| }
\hline
Method                             & P@20 & ERR-IA@20 & ILD@20 & BD@20 & TUDiv@20 & TIDiv@20 & A@20 & G@20 \\ \hline
TOP                                & 0.244 & 0.191 & 0.157 & 0.203 & 0.169 & 0.197 & 0.386 & 0.082 \\ \hline
xQuAD    ($\lambda=0.4$)           & \textbf{0.264} & \textbf{0.236} & 0.165 & 0.224 & 0.207 & 0.139 & 0.368 & 0.072 \\ \hline
MMR      ($\lambda=0.4$)           & 0.233 & 0.183 & \textbf{0.172} & 0.227 & 0.158 & 0.139 & 0.371 & 0.075 \\ \hline
BD       ($\lambda=0.6$)           & 0.227 & 0.208 & 0.156 & \textbf{0.255} & 0.152 & 0.195 & 0.384 & 0.084 \\ \hline
Greedy   ($\beta=0.4, \mu=4.0$)    & 0.253 & 0.200 & 0.163 & 0.238 & \textbf{0.210} & 0.210 & 0.423 & 0.087 \\ \hline
Greedy   ($\beta=4.0, \mu=0.2$)    & 0.243 & 0.202 & 0.156 & 0.236 & 0.202 & \textbf{0.233} & \textbf{0.564} & \textbf{0.111} \\ \hline
\end{tabular}
\caption{MovieLens diversifications for artistic genre based categories and gender based type data. The best value in each metric is bolded. \label{tbl:chapter3_overlap_table3}}
\end{center}
\end{sidewaystable}

\ifdefined\THESIS
\begin{figure}	
\centering
 \includegraphics[width=\textwidth]{Chapters/Chapter3/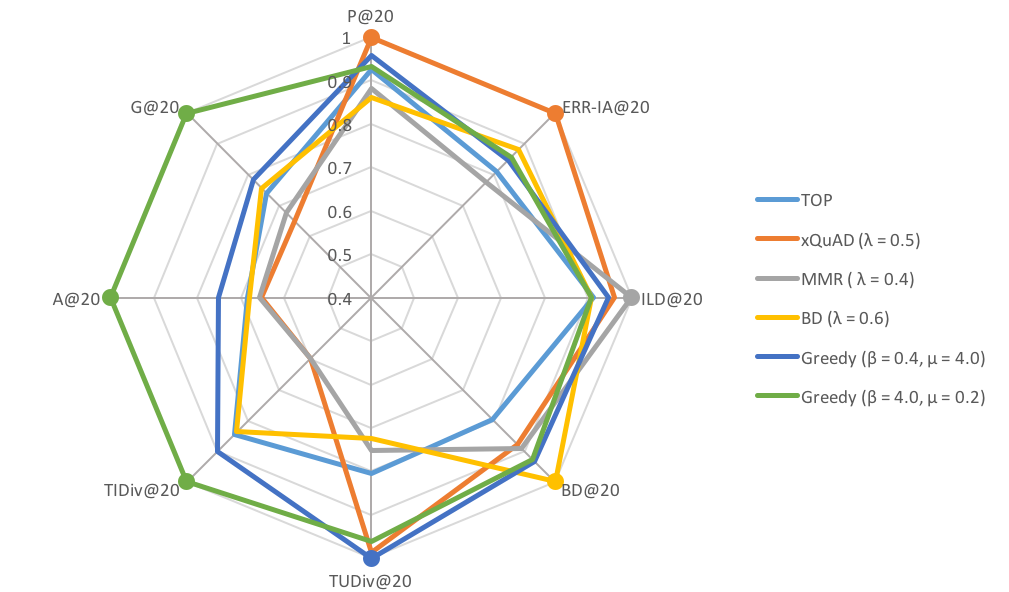}
\caption{A radial graph showing the relative performance of the reranking methods we tested for MovieLens data and gender based diversification. \label{fig:chapter3_overlap_radial}}
\end{figure}
\else
\begin{figure}
\centering
 \includegraphics[width=0.95\columnwidth]{Images/radial_overlap.png}
\caption{A radial graph showing the relative performance of the reranking methods we tested for MovieLens data with movie genre and gender based diversification. \label{fig:chapter3_overlap_radial}}
\end{figure}
\fi

We first present our experiments on overlapping categories based on the artistic genre information for the movies, and user types based on age groups, occupation and gender respectively. Our results are summarized in Tables ~\ref{tbl:chapter3_overlap_table1},\ref{tbl:chapter3_overlap_table2},\ref{tbl:chapter3_overlap_table3}.
The relative performance of the methods we tested for artistic genre categories for the movies and genders of the users can be seen in Figure \ref{fig:chapter3_overlap_radial}.  
As expected each diversification method is best at maximizing its own objectives. In the case of our methods, this is true for both $TIDiv$ and $TUDiv$. Among the metrics we tested, both our greedy algorithm and the xQuAD algorithms made minor improvements to the precision of the recommendation lists, while Binomial Diversity and MMR slightly deteriorated the precision values. However, these differences are minor and each algorithm was able to find a good trade-off between relevance and diversity under suitable parameter settings.

Among the intent-aware metrics we have tested, our algorithms provide a very good proxy for Binomial Diversity, while performing less well on the Intra-List Distance and ERR-IA metrics.  This can be explained by the fact that Binomial Diversity, unlike the ERR-IA and ILD metrics, explicitly penalizes redundancy. Our metric $TUDiv$ is similar to binomial diversity in the sense that it sets thresholds which implicitly penalize over-redundancy by taking away the reward for hitting new categories. However, the converse is not true, and the Binomial Diversity reranking method achieves poor values for both the $TIDiv$ and $TUDiv$ metrics.

Among the methods we tested, the best proxy for our $TUDiv$ metric was provided by the xQuAD approach and none of the algorithms we tested provided a good proxy for $TIDiv$. While this deficiency can be excused, as none of these algorithms take as input the various user type grouping we provide to our diversifiers, each of the other baselines also regressed or insignificantly changed the sales diversity metrics such as the aggregate diversity. This validates our hypothesis that $TIDiv$ is best thought of as a sales diversity measure and that being category-aware in the user lists is not enough for a reranking algorithm to produce diverse results for items.

\subsection{Experiments on Disjoint Categories}
\label{sec:chap3disjoint}

\begin{sidewaystable}
\begin{center}
\begin{tabular}{ |c|c|c|c|c|c|c|c|c| }
\hline
Method                        & P@10 & ERR-IA@10 & ILD@10 & BD@10 & TUDiv@10 & TIDiv@10 & A@10 & G@10 \\ \hline
TOP                           & 0.315 & 0.078 & 0.266 & 0.530 & 0.119 & 0.167 & 0.346 & 0.070\\ \hline
xQuAD ($\lambda=0.2$)         & \textbf{0.317} & 0.081 & 0.271 & 0.545 & 0.125 & 0.167 & 0.357 & 0.072\\ \hline
MMR ($\lambda=0.4$)           & 0.302 & 0.076 & \textbf{0.279} & 0.631 & 0.111 & 0.163 & 0.356 & 0.070\\ \hline
BD ($\lambda=0.8$)            & 0.285 & \textbf{0.092} & 0.268 & \textbf{0.652} & 0.113 & 0.163 & 0.372 & 0.075\\ \hline
Flow ($\beta=0.2, \mu=0.4$)   & 0.277 & 0.055 & 0.246 & 0.599 & 0.118 & 0.205 & \textbf{0.689} & 0.134\\ \hline
Flow ($\beta=0.2, \mu=0.1$)   & 0.281 & 0.056 & 0.251 & 0.601 & 0.125 & 0.201 & 0.627 & 0.120\\ \hline
Greedy ($\beta=0.2, \mu=4.0$) & 0.258 & 0.081 & 0.224 & 0.590 & 0.132 & \textbf{0.227} & 0.674 & \textbf{0.141}\\ \hline
Greedy ($\beta=8.0, \mu=0.2$) & 0.251 & 0.086 & 0.216 & 0.589 & \textbf{0.136} & 0.217 & 0.616 & 0.135\\ \hline
  \end{tabular}
\caption{hMovieLens diversifications for movie studio based categories and age group based types. The best value in each metric is bolded. \label{tbl:chapter3_disjoint_table1}}
\smallskip

\begin{tabular}{ |c|c|c|c|c|c|c|c|c| }
\hline
Method                        & P@10 & ERR-IA@10 & ILD@10 & BD@10 & TUDiv@10 & TIDiv@10 & A@10 & G@10 \\  \hline
TOP                           & 0.315 & 0.078 & 0.266 & 0.530 & 0.119 & 0.140 & 0.346 & 0.070\\ \hline
xQuAD ($\lambda=0.2$)         & \textbf{0.317} & 0.081 & 0.271 & 0.545 & 0.125 & 0.139 & 0.357 & 0.072\\ \hline
MMR ($\lambda=0.4$)           & 0.302 & 0.076 & \textbf{0.279} & 0.631 & 0.111 & 0.136 & 0.356 & 0.070\\ \hline
BD ($\lambda=0.8$)            & 0.285 & \textbf{0.092} & 0.268 & \textbf{0.652} & 0.113 & 0.135 & 0.372 & 0.075\\ \hline
Flow ($\beta=0.2, \mu=0.4$)   & 0.276 & 0.055 & 0.245 & 0.598 & 0.117 & 0.174 & \textbf{0.690} & 0.133\\ \hline
Flow ($\beta=0.2, \mu=0.1$)   & 0.282 & 0.056 & 0.251 & 0.600 & 0.122 & 0.173 & 0.618 & 0.117\\ \hline
Greedy ($\beta=0.2, \mu=4.0$) & 0.260 & 0.081 & 0.225 & 0.593 & 0.133 & \textbf{0.196} & 0.660 & \textbf{0.137}\\ \hline
Greedy ($\beta=8.0, \mu=0.2$) & 0.253 & 0.087 & 0.217 & 0.590 & \textbf{0.137} & 0.185 & 0.613 & 0.131\\ \hline
  \end{tabular}
\caption{MovieLens diversifications based on movie studio based categories and occupation based types. The best value in each metric is bolded. \label{tbl:chapter3_disjoint_table2}}
\smallskip

\begin{tabular}{ |c|c|c|c|c|c|c|c|c| }
\hline
Method                        & P@10 & ERR-IA@10 & ILD@10 & BD@10 & TUDiv@10 & TIDiv@10 & A@10 & G@10 \\  \hline
TOP                           & 0.315 & 0.078 & 0.266 & 0.530 & 0.119 & 0.184 & 0.346 & 0.070\\ \hline
xQuAD ($\lambda=0.2$)         & \textbf{0.317} & 0.081 & 0.271 & 0.545 & 0.125 & 0.185 & 0.357 & 0.072\\ \hline
MMR ($\lambda=0.4$)           & 0.302 & 0.076 & \textbf{0.279} & 0.631 & 0.111 & 0.179 & 0.356 & 0.070\\ \hline
BD ($\lambda=0.8$)            & 0.285 & \textbf{0.092} & 0.268 & \textbf{0.652} & 0.113 & 0.182 & 0.372 & 0.075\\ \hline
Flow ($\beta=0.2, \mu=0.4$)   & 0.277 & 0.055 & 0.247 & 0.599 & 0.117 & 0.220 & \textbf{0.688} & 0.136\\ \hline
Flow ($\beta=0.2, \mu=0.1$)   & 0.282 & 0.056 & 0.252 & 0.601 & 0.127 & 0.219 & 0.631 & 0.121\\ \hline
Greedy ($\beta=0.2, \mu=4.0$) & 0.260 & 0.081 & 0.226 & 0.593 & 0.133 & \textbf{0.246} & 0.669 & \textbf{0.142}\\ \hline
Greedy ($\beta=8.0, \mu=0.2$) & 0.252 & 0.087 & 0.217 & 0.592 & \textbf{0.137} & 0.236 & 0.617 & 0.137\\ \hline
\end{tabular}
\caption{MovieLens diversifications based on movie studio categories and gender based types data. The best value in each metric is bolded. \label{tbl:chapter3_disjoint_table3}}
\end{center}
\end{sidewaystable}

\begin{figure}	
\centering
 \includegraphics[width=0.95\columnwidth]{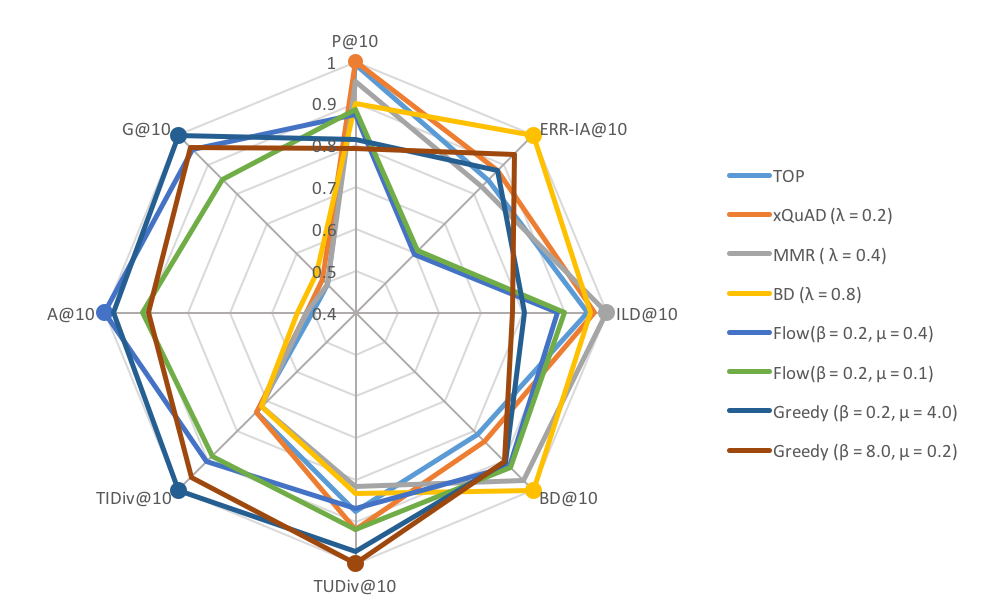}
\caption{A radial graph showing the relative performance of the reranking methods we tested for MovieLens data  with movie studio and age group based diversification. \label{fig:chapter3_disjoint_radial}}
\end{figure}

In this section we present the diversification results for disjoint item categories derived from movie studio information. This is intended to simulate the scenario where a content aggregator would like to diversify recommendations among different content providers. Since we can apply both the greedy algorithm and the flow based algorithm in this case, we report results for both. Our results for the top-10 recommendation diversification task are summarized in \ifdefined\THESIS Tables ~\ref{tbl:chapter3_overlap_table1},\ref{tbl:chapter3_overlap_table2},\ref{tbl:chapter3_overlap_table3} \else Figure ~\ref{fig:chapter3_disjoint_radial} \fi. We note that once again, every reranker optimizes its metric the best, with the exception of the xQuAD, whose objective is actually maximized by the Binomial Diversity reranking method. We also note that the precision based effectiveness of our greedy algorithm is reduced in this setting, while its effectiveness in the sales diversity metrics is amplified.

Our flow-based method and greedy algorithm show several notable differences in experimental evaluation. First, we find that the greedy algorithm actually performs better than the flow-based method in our intent-aware metrics $TUDiv$ and $TIDiv$, although our flow based methods produce more accurate recommendation lists. The solution each algorithm produces creates a different trade-off between the $TUDiv$ term of the objective, the $TIDiv$ term of the objective and the predicted relevance term of the objective.
Both optimize Binomial Diversity equally well, while the flow based method increases intra-list distance and aggregate diversity better than the greedy algorithm. The two methods' overall results are similar enough that if precision is not as big a concern as intent-aware diversification, the two algorithms can be used interchangeably.

This is a significant finding as our flow-based algorithms, while more accurate, takes more time to run to completion.
In particular, our greedy algorithms have runtime proportional to $O(|E|\log(|E|))$ where $|E|$ is the number of candidate edges, while our flow-based algorithms have complexity at least $O(|E|(|R| + |L|))$ and significantly higher overheads.
Moreover, greedy is the fastest among the methods we tested, which can be seen in Table 2.

\begin{table}
\centering
\label{tbl:chap3runtime}
\begin{tabular}{|c|c|c|c|c|c|}
\hline
Method      & Greedy & Flow & MMR  & xQuAD & BD   \\ \hline
Runtime (s) & 5.83   & 20.3 & 8.18 & 11.25 & 31.3 \\ \hline
\end{tabular}
\caption{Running time of the 5 different rerankers on the diversification task in \ifdefined\THESIS Table ~\ref{tbl:chapter3_disjoint_table1} \else Figure \ref{fig:chapter3_disjoint_radial} \fi. \label{tbl:chapter3_runtime}}
\end{table}

\section{Conclusions and Future Work}

We have presented a framework for the implementation of a diversification framework that seeks to increase the exposure of every user to predefined categories of items. 
The implementation of our framework to the case of disjoint categories of items is completely novel, and provides stronger theoretical guarantees on the quality of the solution than the implementation of our framework with overlapping categories. Extending our work to dynamic settings where the new items or the users arrive and depart over time is a rich avenue for future work.

\newpage

\bibliographystyle{siamplain}

\bibliography{references}

\end{document}